\newtheorem{theorem}{Theorem}[section]
\newtheorem{lemma}[theorem]{Lemma}
\newtheorem{claim}[theorem]{Claim}
\newtheorem{corollary}[theorem]{Corollary}
\theoremstyle{definition}  }
\newenvironment{proofof}[1]{\begin{proof}[Proof of #1]}{\end{proof}}
\newcommand{\bg}[1]{\medskip\noindent{\it #1}}
\newcommand{\np}{{\em NP}\xspace}
\newcommand{\nphard}{\np-hard\xspace}
\newcommand{\R}{\ensuremath{\mathbb R}}
\newcommand{\A}{\ensuremath{\mathcal{A}}}
\newcommand{\B}{\ensuremath{\mathcal{B}}}
\newcommand{\Oc}{\ensuremath{\mathcal O}}
\newcommand{\OPT}{\ensuremath{\mathit{OPT}}}
\newcommand{\frall}{\ensuremath{\text{ for all }}}
\newcommand{\sm}{\ensuremath{\setminus}}
\newcommand{\es}{\ensuremath{\emptyset}}
\newcommand\restr[2]{{
  \left.\kern-\nulldelimiterspace 
  #1 
  \vphantom{\big|} 
  \right|_{#2} 
  }}
\newcommand{\e}{\ensuremath{\epsilon}}
\newcommand{\sse}{\subseteq}
\newcommand{\tx}{\ensuremath{\tilde x}}
\newcommand{\hx}{\ensuremath{\hat x}}
\newcommand{\bx}{\ensuremath{\bar x}}
\newcommand{\ld}{\ensuremath{\lambda}}
\newcommand{\al}{\ensuremath{\alpha}}
\newcommand{\tht}{\ensuremath{\theta}}
\newcommand{\dt}{\ensuremath{\delta}}
\newcommand{\Gm}{\ensuremath{\Gamma}}
\newcommand{\sg}{\ensuremath{\sigma}}
\newcommand{\val}{\ensuremath{\mathrm{val}}}
\newcommand{\child}{\ensuremath{\mathsf{ch}}}
\newcommand{\prt}{\ensuremath{\mathcal{S}}}
\newcommand{\level}{\ensuremath{L}}
\newcommand{\dks}{\ensuremath{\mathsf{D}k\mathsf{S}}\xspace}
\newcommand{\mindks}{\ensuremath{\mathsf{Min}}\dks}
\newcommand{\maxst}{\ensuremath{\mathsf{MaxST}}\xspace}
\newcommand{\new}{\ensuremath{\mathsf{new}}}
\newcommand{\bgd}{\ensuremath{\mathsf{BGD}}\xspace}
\newcommand{\pok}{\ensuremath{\mathsf{POK}}\xspace}
\newcommand{\opttknlp}{\ensuremath{\OPT_{\text{\ref{tknlp}}}}}
\newcommand{\UB}{\ensuremath{\mathit{UB}}}
\DeclareMathOperator{\rk}{rk}
\title{Improved Algorithms for MST and Metric-TSP Interdiction%
\footnote{A preliminary version~\cite{LinharesS17} appeared in the Proceedings of the
  International Colloquium on Automata, Languages and Programming (ICALP), 2017.}}
\author{
         Andr\'e Linhares\thanks{{\tt \{alinhare,cswamy\}@uwaterloo.ca}. 
         Dept. of Combinatorics and Optimization, Univ. Waterloo, Waterloo, ON N2L 3G1.
         Supported in part by NSERC grant 327620-09 and an NSERC Discovery Accelerator
         Supplement award.}   
\and
\addtocounter{footnote}{-1}
         Chaitanya Swamy\footnotemark
}
\date{}
\begin{document}

\maketitle

\begin{abstract}
We consider the {\em MST-interdiction} problem: given a multigraph $G = (V, E)$, edge
weights $\{w_e\geq 0\}_{e \in E}$, interdiction costs $\{c_e\geq 0\}_{e \in E}$, and an
interdiction budget $B\geq 0$, the goal is to remove a set $R\sse E$ of edges of
total interdiction cost at most $B$ so as to maximize the $w$-weight of an MST of
$G-R:=(V,E\sm R)$. 

Our main result is a $4$-approximation algorithm for this problem. 
This improves upon the previous-best
$14$-approximation~\cite{Zenklusen15}. 
Notably, our analysis is also significantly simpler 
and cleaner than the one in~\cite{Zenklusen15}. 
Whereas~\cite{Zenklusen15} uses a greedy algorithm with an involved
analysis to extract a good interdiction set from an over-budget set, 
we utilize a generalization of knapsack called the {\em tree knapsack problem} that nicely
captures the key combinatorial aspects of this ``extraction problem.'' 
We prove a simple, yet strong, LP-relative approximation bound for tree knapsack, which 
leads to our improved guarantees for MST interdiction. Our algorithm and analysis are
nearly tight, as we show that one cannot achieve an approximation ratio better than 3
relative to the upper bound used in our analysis (and the one in~\cite{Zenklusen15}).

Our guarantee for MST-interdiction yields an $8$-approximation for 
{\em metric-TSP interdiction} (improving over the $28$-approximation
in~\cite{Zenklusen15}). We also show that 
the {\em maximum-spanning-tree interdiction} problem is at least as hard to approximate as
the minimization version of densest-$k$-subgraph.
\end{abstract}

\section{Introduction} 
Interdiction problems are a broad class of optimization problems with a wide range of
applications. They model the
problem faced by an attacker, who given an underlying, say, minimization, problem, 
aims to  
destroy or {\em interdict} the elements involved in the optimization problem (e.g., nodes
or edges in a network-optimization problem) 
without exceeding a given interdiction budget, so as to maximize the optimal value of the
residual optimization problem (where one cannot use the interdicted elements). 
A classical example is the {\em minimum-spanning-tree (MST) interdiction}
problem~\cite{LinC93,FredericksonS99,Zenklusen15}, which is the focus of this work: 
we are given a multigraph $G = (V, E)$, edge weights $\{w_e\geq 0\}_{e \in E}$,
interdiction costs $\{c_e\geq 0\}_{e \in E}$, and an interdiction budget 
$B\geq 0$; the goal is to interdict (i.e., remove) a set $R\sse E$ of edges of
total interdiction cost at most $B$ so as to maximize the $w$-weight of an MST of the
multigraph $G-R:=(V,E\sm R)$. 
Note that $G$ may have parallel
edges, which 
can be useful in modeling partial-interdiction effects, wherein interdicting 
an edge causes an increase in its weight that depends on the interdiction cost incurred
for the edge.  

At a high level, interdiction problems can be seen as investigating the sensitivity of an
underlying optimization problem with respect to the removal of a limited set of underlying
elements. This type of sensitivity analysis may be utilized to identify vulnerable
spots (e.g., regions in a network) either: 
(a) for possible reinforcement, or, (b) if the optimization problem models 
an undesirable process (e.g., the spread of infection, or nuclear-arms smuggling),
for disruption, so as to maximally impair the underlying process.
A variety of applications of interdiction problems ensue from these
two perspectives, including infrastructure protection~\cite{ChurchSM04,SalmeronWB09},
hospital-infection control~\cite{Assimakopoulos87}, 
prevention of nuclear-arms smuggling~\cite{PanCM02}, and military
planning~\cite{GhareMT71} (see also the references in~\cite{Zenklusen15}). 
Consequently, interdiction problems have been extensively studied, especially in the
Operations Research literature; besides MST-interdiction, some well-studied interdiction 
problems include 
network-flow interdiction~\cite{Phillips93,Wood93,BurchCKMPS03,Zenklusen10b,GuruganeshSS15,ChestnutZ15}, 
shortest $s$-$t$ path interdiction~\cite{FulkersonH77,IsraeliW02,KhachiyanBBEGRZ08,Lee16}, 
and maximum-matching interdiction~\cite{Zenklusen10a,DinitzG13}.   
All these problems, as well as MST-interdiction, are \nphard.

\paragraph{Our results.}
Our main result is a $4$-approximation algorithm for MST interdiction
(Theorem~\ref{mstinterthm}), i.e., we compute in polytime a solution of value at least
(optimum)/4.    
This constitutes a substantial improvement over the previous-best approximation ratio of    
$14$ obtained by Zenklusen~\cite{Zenklusen15}. 

Notably, and perhaps more importantly, our algorithm is simple, and its analysis   
is significantly simpler and cleaner than the one in~\cite{Zenklusen15}. 
The key ingredient (see also ``Our techniques'') of both our algorithm and
the one in~\cite{Zenklusen15} is a procedure for extracting a good interdiction set from
one that exceeds the interdiction budget. Whereas~\cite{Zenklusen15} uses a greedy
algorithm with a rather involved analysis to achieve this, 
our simple and more-effective procedure is based on two chief insights.
First, 
we discern that the key combinatorial aspects of this ``extraction problem'' can be
captured quite nicely via a clean generalization of the knapsack problem 
called the {\em tree knapsack problem}~\cite{JohnsonN83} (Section~\ref{tkn}). 
In particular, we argue that approximation guarantees for tree knapsack 
{\em relative to the natural LP for this problem} translate directly to guarantees for MST
interdiction. 
Second, complementing the above insight, we show that the tree knapsack problem admits a
simple {\em iterative-rounding} based algorithm that achieves a strong LP-relative
guarantee (Theorem~\ref{tknthm}, Corollary~\ref{strongtkn}). 
Our improved guarantee for MST interdiction then readily follows by combining these
two ideas.

We also show a lower bound of 3 (Theorem~\ref{theo:integrality_gap}) on the approximation
ratio achievable relative to the upper bound used in our analysis (and the analysis
in~\cite{Zenklusen15}), thereby showing that our algorithm and analysis are nearly tight.  

Our MST-interdiction result also yields an improved guarantee for the
{\em metric-TSP interdiction} problem (Section~\ref{tspinter}), wherein we have 
{\em metric} edge weights $\{w_e\}$, 
and we seek an
interdiction set $R$ with $\sum_{e\in R}c_e\leq B$ so as to maximize the minimum
$w$-weight of a closed walk in $G-R$ that visits all nodes at least once. 
Since an $\al$-approximation for MST interdiction yields a $2\al$-approximation for metric-TSP 
interdiction~\cite{Zenklusen15}, we obtain an approximation factor of $8$ for metric-TSP
interdiction, which improves upon the previous-best factor of $28$~\cite{Zenklusen15}. 

In Section~\ref{maxstinter}, we consider the {\em maximum-spanning-tree interdiction}
problem, where the goal is to {\em minimize the maximum $w$-weight of a spanning tree
of $G-R$}. We show that this problem is at least as hard to approximate as the minimization
version of the {\em densest-$k$-subgraph} problem (\mindks). \mindks does not admit any
constant-factor approximation under certain less-standard complexity
assumptions~\cite{RaghavendraS10} (and is believed to have a larger
inapproximability threshold), so this highlights a stark contrast with the
{MST-interdiction problem.}

\paragraph{Our techniques.}
We give an overview of our algorithm for MST interdiction.
Let $\val(R)$ be the $w$-weight of an MST of $G-R$. 
Using standard arguments, we can reduce the problem to the following setting (see
Section~\ref{prelim} and Theorem~\ref{bipoint}): we are given interdiction sets 
$R_1\sse R_2$ with $c(R_1)<B<c(R_2)$ such that $a\cdot\val(R_1)+b\cdot\val(R_2)\geq\OPT$,
where $\OPT$ is the
optimal value and $a,b\geq 0$ are such that $a+b=1$ and 
$a\cdot c(R_1)+b\cdot c(R_2)=B$. 
These arguments resemble the ones in~\cite{Zenklusen15}, but 
we do not need to assume that the $w_e$ weights are powers of $2$. (We
emphasize however that this by itself is not the chief source of our improvement.)
The technical meat of the algorithm, and where we diverge significantly
from~\cite{Zenklusen15} to obtain our improved guarantee, is to show how to extract a good 
interdiction set from $R_1, R_2$. As mentioned earlier, we replace the greedy algorithm
of~\cite{Zenklusen15} for extracting a good interdiction set from $R_2$, and its
associated intricate analysis, by considering the tree-knapsack problem to capture the key
aspects of this extraction problem, and devise a simple iterative-rounding algorithm that
yields a strong LP-relative guarantee for tree knapsack. This conveniently 
translates to a much-improved 5-approximation algorithm for MST interdiction
(Theorem~\ref{5apxthm}). The further improvement to a 4-approximation arises by also
leveraging $R_1$ to find a good interdiction set: instead of focusing solely on $R_2$ (as
done in~\cite{Zenklusen15}), we return an interdiction set $R$ such that 
{$R_1\sse R\sse R_2$ (see Section~\ref{4apx}).}

To arrive at the tree knapsack problem, 
observe that 
$\val(R)$ can be conveniently expressed as a weighted sum
of the number of components of $(V,\{e\in E\sm R: w_e\leq t\})$, where $t$ ranges over
some distinct edge weights, say, $0\leq w_1<\dots<w_k$ 
(Lemma~\ref{lem:val}). 
Let $\A_0$ denote the components of $(V,E_{\leq 0}:=\es)$, and
$\A_i$ denote the components of $(V,E_{\leq i}:=\{e\in E\sm R_2: w_e\leq w_i\})$ for
$i=1,\ldots,k$. 
The multiset $\bigcup_{i=0}^k \A_i$ forms a laminar family, 
which can be viewed as a rooted tree. We seek to build our interdiction set $R$ by
selecting a suitable collection of sets from this laminar family, ensuring that if we
pick a component $A\in\A_i$, then $\dt(A)\cap E_{\leq i}$ is included in $R$
(so that $A$ is indeed a component of $(V,E_{\leq i}\sm R)$). 
Whereas $\val(R)$ is nicely {\em decoupled} across the selected components, 
it is harder to decouple the interdiction cost incurred and account for it.
For instance, summing $c\bigl(\dt(A)\cap E_{\leq i}\bigr)$ for each selected $A\in\A_i$
may grossly overestimate the interdiction cost, whereas 
summing $c\bigl(\dt(A)\cap\{e: w_e=w_i\}\bigr)$ for each selected $A\in\A_i$ 
{\em underestimates} the interdiction cost.
A crucial insight is that, if we ensure that {\em whenever we pick $A\in\A_i$, we   
also pick its children in the laminar family}, 
then {\em summing $c\bigl(\dt(A)\cap\{e: w_e=w_i\}\bigr)$ for each selected
$A\in\A_i$ is a good proxy for the interdiction cost incurred.}

This motivates the definition of the tree knapsack problem: given a rooted tree $\Gm$ with
node values $\{\alpha_v\}$, node weights $\{\beta_v\}$, and budget $B$, we want to pick a
maximum-value  
{\em downwards-closed} set of nodes (not containing the root) whose weight is at most $B$, 
where downwards-closed means that if we pick a node, then we also pick all its
children. 
The standard knapsack problem is thus the special case where $\Gm$ is a
star (rooted at its center). We consider the natural LP \eqref{tknlp} for tree knapsack,
and generalizing a well-known result for knapsack, show that we can efficiently compute a
solution of 
value at least $\opttknlp-\max_{\text{chains $C$}}\sum_{v\in C}\al_v$
(Theorem~\ref{tknthm}),  where a chain is a subset of a root-leaf path.

Finally, we show that for the tree-knapsack instance derived (as above) from
$R_2$, $\opttknlp$ is ``large'' (Lemma~\ref{lem:frac}), and combining this with the above
bound yields our approximation guarantee.

\paragraph{Related work.} 
MST interdiction in its full generality seems to have been first considered
by~\cite{LinC93}, who showed that the problem is \nphard. The approximation question for
MST interdiction was first investigated by~\cite{FredericksonS99}. 
They focused on the setting with unit interdiction costs, often called the 
{\em $B$-most-vital-edges} problem, showed that this special case
remains \nphard, 
and obtained an $O(\log B)$-approximation (which also yields an 
$O(\log |E|)$-approximation with general interdiction costs).  
This 
guarantee was improved only recently by Zenklusen~\cite{Zenklusen15},
who gave the first (and previous-best) $O(1)$-approximation algorithm for (general) MST
interdiction, 
achieving an approximation ratio of $14$. The $B$-most-vital edges problem has been well
studied for $B=1$ and for $B=O(1)$, where it can be solved optimally; 
see, e.g.,~\cite{Liang01} and the references therein. The special case of MST interdiction
where we have only two distinct edge weights captures the {\em budgeted graph disconnection}
(\bgd) problem~\cite{EngelbergKLN07} for which a $2$-approximation is
known~\cite{EngelbergKLN07}. As noted by~\cite{Zenklusen15}, MST interdiction can be
viewed as multilevel-\bgd, which makes it much more challenging 
as it is
difficult to control the interactions at the different levels. It is noteworthy that our
approximation ratio of $4$ for MST interdiction is quite close to the approximation
ratio of $2$ for \bgd.  

As with MST interdiction, until recently, 
there were wide gaps in our understanding of the approximability 
of the other classic \nphard interdiction problems mentioned earlier. 
Maximum $s$-$t$ flow interdiction, even on undirected graphs with unit
interdiction costs, is now known to be at least as hard as \mindks on $\ld$-uniform
hypergraphs. This follows from a recent hardness result for {\em $k$-route $s$-$t$ cut} 
in~\cite{GuruganeshSS15}, which turns out to be an equivalent problem.%
\footnote{In $k$-route $s$-$t$ cut, the goal is to remove a min $w$-cost set of edges so as
to reduce the $s$-$t$ edge connectivity to at most $k-1$. 
This corresponds to taking all but the $k-1$ most-expensive edges of some cut. 
So we can rephrase this problem as follows: remove at most $k-1$
edges to minimize the (min-$s$-$t$-cut value = max-$s$-$t$-flow value) with capacities
$\{w_e\}$; 
this is precisely the maximum $s$-$t$ flow interdiction problem with 
{unit interdiction costs and budget $k-1$.}}     
This hardness result has been rediscovered (in a slightly weaker
form) by~\cite{ChestnutZ15}, who also gave an $O(n)$-approximation algorithm.
For shortest $s$-$t$ path interdiction, very recently, Lee~\cite{Lee16} proved a
super-constant hardness result. For maximum-matching interdiction, \cite{DinitzG13}
devised the first $O(1)$-approximation algorithm. 
Despite this recent progress, interdiction variants of common optimization problems are
generally not well understood, especially from the viewpoint of approximability.

The tree knapsack problem was introduced by~\cite{JohnsonN83}, and is a special case of
the {\em partially-ordered knapsack} (\pok) problem~\cite{KolliopoulosS07}. While an
FPTAS can be obtained for tree knapsack and some special cases of \pok via dynamic
programming~\cite{JohnsonN83,KolliopoulosS07}, and the natural LP for 
\pok has been investigated~\cite{KolliopoulosS07}, our LP-relative guarantee and rounding
{algorithm for tree knapsack are new.}

\section{Preliminaries} \label{prelim} 

For any vector $d\in\R^E$ and any subset $F\sse E$ of edges, we use $d(F)$ to denote
$\sum_{e\in F}d_e$. 
Given a subset $R\subseteq E$ of edges, we use $\val(R)$, which we call the {\em value} of
$R$, to denote the $w$-weight of an MST in the multigraph $G-R$, 
i.e., $\val(R):=\min_{\text{spanning trees $T$ of $G-R$}}w(T)$. 
The {\em minimum-spanning-tree interdiction} problem can thus be restated as 
$\max\ \bigl\{\val(R): R \subseteq E,\ c(R) \le B \bigr\}$.

If there is an interdiction set $R$ with $c(R)\leq B$ such that $G-R$ is disconnected,
then $\val(R) = \infty$, and so the MST-interdiction problem is unbounded. 
Note that this happens iff a min-cut $\dt(S)$ of $G$ satisfies 
$c\bigl(\dt(S)\bigr)\leq B$, and we can efficiently detect this.
So in the sequel, we assume that this is not the case.
Let $\OPT$ denote the optimal value of the MST-interdiction problem (which is now finite). 
For $F\sse E$, let $\sg(F)$ denote the number of connected components of $(V,F)$.

Let $w_1,w_2,\ldots,w_M$ be the distinct weights in $\{w_e: e\in E\}$, where 
$0\leq w_1<w_2<\dots<w_M$. For $i=1,\ldots,M$, define 
$E_i:=\{e\in E: w_e=w_i\}$ and $E_{\leq i}:=\{e\in E: w_e\leq w_i\}$. For notational convenience, 
we define $w_0:=0$ and $E_0=E_{\leq 0}:=\es$. 
(Note that $E_0$ is {\em not} necessarily $\{e\in E: w_e=w_0\}$, and $E_{\leq 0}$ is not
necessarily $\{e\in E: w_e\leq w_0\}$.) 

Let $k\in\{1,\ldots,M\}$ be the {\em smallest} index such that 
$c\bigl(\dt(S)\cap E_{\leq k}\bigr) >B$ for every $\es\neq S\subsetneq V$; that is, the
multigraph $(V,E_{\leq k}\sm R)$ is connected for all $R$ such that $c(R)\leq B$. 
Note that $k$ is well defined due to our earlier assumption. This implies the following
properties, as also observed in~\cite{Zenklusen15}:
(i) $\OPT\geq w_k$ (since, by definition of $k$, there is a feasible interdiction set $R$ whose removal
disconnects $(V,E_{\leq k-1})$);   
(ii) for any $R$ with $c(R)\leq B$, we have $\val(R)=\val(R\cap E_{\leq k-1})$, and hence,
there is an optimal solution that only interdicts edges from $E_{\leq k-1}$; and 
(iii) given (ii), we may add additional edges of weight $w_k$ without impacting the
optimal value, so we may assume that $(V,E_k)$ 
is connected.
We summarize these properties and assumptions below. 

\begin{claim} \label{ass:connected}
Let $k\in\{1,\ldots,M\}$ be the smallest index such that 
$(V,E_{\leq k}\sm R)$ is connected for every $R \subseteq E$ with $c(R) \le B$. Assume
that such a $k$ exists. Then, (i) $\OPT\geq w_k$, and (ii) there is an optimal solution
$R^*$ such that $R^*\sse E_{\leq k-1}$. Moreover, we may assume that (iii) the multigraph 
$(V,E_k)$ is connected.
\end{claim}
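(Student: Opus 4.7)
I would establish the three parts in the stated order, since (iii) leverages (ii), which in turn builds on a Kruskal-style observation that also underlies (i).

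For (i), I invoke the minimality of $k$. When $k \ge 2$, minimality says that $k-1$ fails the defining property: there exists a feasible $R$ with $c(R) \le B$ and some cut $\dt(S)$ ($\es \ne S \subsetneq V$) with $\dt(S) \cap (E_{\leq k-1} \sm R) = \es$. Any spanning tree of $G - R$ must bridge $S$ and $V \sm S$, and the only edges available to do so have weight $\ge w_k$. Hence $\val(R) \ge w_k$, giving $\OPT \ge w_k$. The $k = 1$ case is immediate by taking $R = \es$, since $E_{\leq 0} = \es$ forces any spanning tree to use an edge of weight $\ge w_1$.

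For (ii), given an optimal $R^*$, I set $R_1 := R^* \cap E_{\leq k-1}$ and aim to show $\val(R_1) = \val(R^*)$. The inequality $\val(R_1) \le \val(R^*)$ is immediate since $R_1 \sse R^*$ makes $G - R_1$ a supergraph of $G - R^*$. For the reverse, I run Kruskal's algorithm on both graphs, weight class by weight class. Because $R_1$ and $R^*$ coincide on $E_{\leq k-1}$, the two graphs have identical edges of weight $< w_k$ and therefore identical component partitions after processing these weights. The defining property of $k$ guarantees $(V, E_{\leq k} \sm R^*)$ is connected, so Kruskal in $G - R^*$ completes the MST using only $w_k$-edges; the same holds in $G - R_1$, which has at least as many $w_k$-edges available. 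In both runs the number of $w_k$-edges used equals the remaining-component count minus one, so the MST weights agree, and no edge of weight $> w_k$ enters either MST.

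For (iii), I augment $G$ with a set $E_k^{\new}$ of extra edges of weight $w_k$, each with interdiction cost $B+1$, chosen so that $(V, E_k \cup E_k^{\new})$ is connected; call the modified multigraph $G'$. No feasible interdiction in $G'$ can touch $E_k^{\new}$, so the minimum index $k$ is preserved in $G'$. Applying (ii) in $G'$ yields an optimum $R^* \sse E_{\leq k-1}$ (unchanged under the modification), and a repeat of the Kruskal comparison shows $\val_{G'}(R^*) = \val_G(R^*)$ since Kruskal terminates by weight $w_k$ in both. Conversely, $\val_{G'}(R) \le \val_G(R)$ for every feasible $R$, since adding edges can only lighten an MST. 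Combining gives $\OPT(G') = \OPT(G)$, so we may work in $G'$, where $(V, E_k)$ is connected by construction.

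The main obstacle is the Kruskal-style comparison in (ii): I must carefully rule out that any edge of weight $> w_k$ is forced into the MST. This is precisely what the connectivity hypothesis on $(V, E_{\leq k} \sm R^*)$ provides, and the same observation is what (iii) reuses when incorporating the added $w_k$-weight edges.
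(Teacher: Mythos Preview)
Your plan follows the paper's own (brief, informal) justification almost verbatim: use minimality of $k$ for (i), a Kruskal comparison to show $\val(R)=\val(R\cap E_{\leq k-1})$ for (ii), and for (iii) adjoin extra weight-$w_k$ edges of prohibitive interdiction cost. The approach is correct and the same as the paper's.

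One small slip in (iii): as written, both of your facts yield only $\OPT(G')\leq\OPT(G)$. From ``$R^*$ optimal in $G'$ with $\val_{G'}(R^*)=\val_G(R^*)$'' you get $\OPT(G')=\val_G(R^*)\leq\OPT(G)$, and the ``Conversely'' inequality $\val_{G'}(R)\leq\val_G(R)$ for all feasible $R$ gives the \emph{same} direction again. To obtain $\OPT(G)\leq\OPT(G')$, apply (ii) in $G$ (not only in $G'$): take $R^{**}\sse E_{\leq k-1}$ optimal in $G$ and reuse the Kruskal comparison to get $\OPT(G)=\val_G(R^{**})=\val_{G'}(R^{**})\leq\OPT(G')$. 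With this fix the argument is complete.
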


\begin{lemma} \label{lem:val}
Let $R\subseteq E$ be an edge-set such that $(V,E_{\leq k}\sm R)$ is connected. Then
$\val(R) = - {w}_k + \sum_{i =0}^{k-1} \sigma\left(E_{\le i} \setminus R\right)({w}_{i+1} - {w}_i)$.
\end{lemma}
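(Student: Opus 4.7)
The plan is to express $\val(R)$ as a weighted count of MST edges stratified by weight class and then convert to the desired form via Abel summation (summation by parts).

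First, I would run Kruskal's algorithm on $G-R$, processing edges in order of nondecreasing weight. The key combinatorial observation is that, after all edges of weight at most $w_j$ have been processed, the partial forest built by Kruskal has the same connected components as $(V, E_{\leq j}\sm R)$, hence exactly $|V|-\sigma(E_{\leq j}\sm R)$ edges. Writing $\sigma_i := \sigma(E_{\leq i}\sm R)$ for brevity, it follows that Kruskal adds exactly $\sigma_{i-1}-\sigma_i$ edges of weight $w_i$ for each $i=1,\ldots,k$. Because $(V, E_{\leq k}\sm R)$ is connected by hypothesis, $\sigma_k=1$, so after stage $k$ Kruskal has produced a complete spanning tree and no heavier edges are needed. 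This yields
\[
\val(R) \;=\; \sum_{i=1}^{k} w_i\,(\sigma_{i-1}-\sigma_i).
\]

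Next, I would apply Abel summation. Splitting the sum and reindexing $\sum_{i=1}^{k} w_i\sigma_{i-1} = \sum_{i=0}^{k-1} w_{i+1}\sigma_i$, I collect terms to rewrite the right-hand side as
\[
\sum_{i=0}^{k-1}(w_{i+1}-w_i)\sigma_i \;+\; w_0\sigma_0 \;-\; w_k\sigma_k.
\]
Plugging in the conventions $w_0=0$ and $E_{\leq 0}=\es$ (so the $i=0$ term correctly uses $\sigma_0=|V|$), together with the boundary condition $\sigma_k=1$, immediately produces the claimed identity $\val(R) = -w_k + \sum_{i=0}^{k-1}\sigma_i(w_{i+1}-w_i)$.

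There is no real obstacle here: the lemma is essentially an index-shifting identity layered on top of a standard fact about Kruskal's algorithm. The only points requiring a bit of care are (a) justifying the per-weight-level edge count via the component-count formula for a forest, and (b) correctly handling the boundary terms when reindexing, in particular ensuring that the $i=0$ term on the right-hand side matches the $w_0\sigma_0$ piece that drops out of Abel summation under the convention $w_0=0$. Both are routine bookkeeping given the setup preceding the lemma.
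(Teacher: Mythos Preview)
Your proposal is correct and takes essentially the same approach as the paper: both use Kruskal's algorithm to obtain $\val(R)=\sum_{i=1}^{k} w_i(\sigma_{i-1}-\sigma_i)$ and then rearrange the sum. The only cosmetic difference is that the paper expands $w_j=\sum_{i=0}^{j-1}(w_{i+1}-w_i)$ and swaps the order of summation, whereas you apply Abel summation directly; these are equivalent manipulations.
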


\begin{proof}
Consider, for example, running Kruskal's algorithm to obtain an MST of $G-R$.
We include exactly $\sg(E_{\leq j-1} \setminus R)-\sg(E_{\leq j} \setminus R)$ edges of
weight $w_j$ for every $1 \le j \le M$, and this 
quantity is 0 for all $j>k$. It follows that  
\begin{align*}
\val(R) & = \sum_{j=1}^{M}\Bigl(\sigma\left(E_{\le j-1} \setminus R\right) - \sigma(E_{\le j} \setminus R)\Bigr){w}_j 
= \sum_{j=1}^k\Bigl(\sg(E_{\leq j-1}\sm R)-\sg(E_{\leq j}\sm R)\Bigr)w_j \\
& = \sum_{j=1}^{k}\Bigl(\sigma\left(E_{\le j-1} \setminus R\right) - \sigma(E_{\le j}\setminus R)\Bigr)
\sum_{i=0}^{j-1} ({w}_{i+1} - {w}_{i}) \\
& = \sum_{i=0}^{k-1}({w}_{i+1}-{w}_i)
\sum_{j=i+1}^{k}\Bigl(\sigma (E_{\le j-1} \setminus R) - \sigma(E_{\le j} \setminus R)\Bigr) \\
& = \sum_{i=0}^{k-1}\bigl(\sigma(E_{\le i} \setminus R) - 1\bigr)({w}_{i+1} - {w}_i)
= -w_k + \sum_{i=0}^{k-1}\sigma(E_{\le i} \setminus R)({w}_{i+1} - {w}_i). \ \ \qedhere
\end{align*}
\end{proof}

Given Claim~\ref{ass:connected}, we focus on interdiction sets $R\sse E_{\leq k-1}$
and recast the MST-interdiction problem as: 
$\max\ \bigl\{\val(R): R\sse E_{\leq k-1},\ c(R)\leq B\bigr\}$. As is common in the study
of constrained optimization problems (see, e.g.,~\cite{KonemannPS11,GrandoniRSZ14}
and the references therein), we Lagrangify the budget constraint $c(R)\leq B$,
and consider the following Lagrangian problem (offset by $-\ld B$), 
{where $\ld\geq 0$ is a parameter:} 
\begin{equation}
\max_{R\sse E_{\leq k-1}} \quad f_\ld(R):=\val(R)-\ld c(R). \tag{P$_{\ld}$} 
\label{lagp} 
\end{equation}
The expression for $\val(R)$ in Lemma~\ref{lem:val} holds for all $R\sse E_{\leq k-1}$ as
$(V,E_k)$ 
is connected. Since 
$\sg(E_{\leq i}\sm R)$ is a supermodular function of $R$, this implies that $\val(\cdot)$,
and hence the objective function $f_\ld(\cdot)$ of \eqref{lagp}, 
is {\em supermodular} over the domain $2^{E_{\leq  k-1}}$: for any   $A_1,A_2\sse E_{\leq k-1}$, we have
$f_\ld(A_1)+f_\ld(A_2)\leq f_\ld(A_1\cap A_2)+f_\ld(A_1\cup A_2)$. 
Hence, \eqref{lagp} can be solved exactly, which we crucially exploit. 

Let $\Oc^*_\ld$ denote the set of optimal solutions to \eqref{lagp}.
Observe that for any $\ld\geq 0$ and any $R\in\Oc^*_\ld$, 
we have $\val(R)-\ld c(R)\geq\OPT-\ld B$.
So if we find some $\ld\geq 0$ and $R\in\Oc^*_\ld$ such that 
$c(R)=B$, we have $\val(R)\geq\OPT$, so $R$ is an optimal solution. 
In general, such a pair $(\ld,R)$ need not exist, or can be hard to find. However, by doing a
binary search for $\ld$, or alternatively, as noted in~\cite{Zenklusen15}, via parametric
submodular-function minimization~\cite{FleischerI03,Nagano07}, we can obtain 
the following result; we include a self-contained proof in Appendix~\ref{append-prelim}.

\begin{theorem}[\cite{Zenklusen15}] \label{bipoint}
One can find in polytime: either (i) an optimal solution to the MST-interdiction problem,  
or (ii) a parameter $\ld \ge 0$ and two optimal solutions $R_1, R_2$ to \eqref{lagp} such
that $R_1 \subseteq R_2$ and  $c(R_1)<B<c(R_2)$. 
\end{theorem}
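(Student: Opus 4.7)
The plan is to exploit the convex piecewise-linear structure of $g(\ld) := \max_{R \sse E_{\leq k-1}} f_\ld(R)$ together with the lattice structure of $\Oc^*_\ld$, via a binary search on $\ld$. Two preliminary observations are crucial. \emph{First}, supermodularity of $f_\ld$ implies $\Oc^*_\ld$ is closed under $\cap$ and $\cup$: for any $A_1,A_2 \in \Oc^*_\ld$ we have $f_\ld(A_1)+f_\ld(A_2) \le f_\ld(A_1\cap A_2)+f_\ld(A_1\cup A_2) \le 2\max_R f_\ld(R)$, so equality holds throughout and $A_1\cap A_2, A_1\cup A_2 \in \Oc^*_\ld$. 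Hence $\Oc^*_\ld$ has a unique minimum element $R^-_\ld$ and a unique maximum element $R^+_\ld$, with $R^-_\ld \sse R^+_\ld$ and $c(R^-_\ld) \le c(R^+_\ld)$; both can be computed by supermodular maximization with a small lexicographic perturbation $\mp\e c(R)$ that breaks ties in favour of min/max-cost optimizers. \emph{Second}, if $0 \le \ld_1 < \ld_2$ and $R_i \in \Oc^*_{\ld_i}$, summing the two optimality inequalities gives $(\ld_2-\ld_1)\bigl(c(R_1)-c(R_2)\bigr) \ge 0$, so $c(R^-_\ld)$ and $c(R^+_\ld)$ are nonincreasing in $\ld$.

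Next I would handle the two extremes and binary-search in between. At $\ld = 0$, $f_0 = \val$, so $R^-_0$ is a $\val$-maximizer; if $c(R^-_0) \le B$ then $R^-_0$ is feasible and satisfies $\val(R^-_0) \ge \OPT$, giving case (i). For $\ld$ sufficiently large, every optimizer of $f_\ld$ has $c = 0$ (else $-\ld c(R)$ dominates), so in particular $c(R^+_\ld) = 0 < B$ (the case $B=0$ is trivial). If case (i) does not already hold, we binary-search on $\ld$, maintaining the invariant $c(R^-_{\ld_{\text{low}}}) > B$ and $c(R^+_{\ld_{\text{high}}}) < B$. At each midpoint $\ld$, we compute $R^-_\ld, R^+_\ld$: if $c(R^-_\ld) = B$ or $c(R^+_\ld) = B$, we return that set (case (i)); if $c(R^-_\ld) < B < c(R^+_\ld)$, we return $R_1 = R^-_\ld, R_2 = R^+_\ld$ (case (ii)); otherwise we update $\ld_{\text{low}} \assign \ld$ if $c(R^-_\ld) > B$, or $\ld_{\text{high}} \assign \ld$ if $c(R^+_\ld) < B$.

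The main obstacle will be to argue polynomial termination and to extract $R_1, R_2$ at exactly the breakpoint $\ld^*$ of $g$ where $c(R^-_{\ld^*}) \le B \le c(R^+_{\ld^*})$. This $\ld^*$ is a rational number with polynomial bit complexity, being a ratio of differences of $c$- and $\val$-values (each polynomially bounded in the input). Thus after $O(\poly)$ binary-search iterations we isolate an interval containing exactly one breakpoint; we then pinpoint $\ld^*$ algebraically as the intersection of the two adjacent linear pieces of $g$, and a final supermodular-maximization call at $\ld^*$ yields $R^-_{\ld^*} \sse R^+_{\ld^*}$ with the required strict cost inequalities, giving case (ii). Alternatively, as noted in~\cite{Zenklusen15}, one may invoke parametric submodular-function minimization~\cite{FleischerI03,Nagano07} to enumerate all breakpoints of $g$ in polynomial time and directly extract the one meeting our criterion.
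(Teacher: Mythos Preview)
Your approach is correct and closely parallels the paper's, with one noteworthy variation. Both arguments exploit the lattice structure of $\Oc^*_\ld$, binary-search on $\ld$ using the polynomial bit-complexity of the breakpoints of $g$, and cite parametric submodular minimization as a strongly-polynomial alternative. The difference lies in how the containment $R_1 \subseteq R_2$ is obtained. The paper tracks only the \emph{minimal} optimizer $R^*_\ld$ and proves a nesting lemma across different multipliers (namely, $\ld > \ld'$ implies $R^*_\ld \subseteq R^*_{\ld'}$); it then takes $R_1 = R^*_{\ld_1}$ and $R_2 = R^*_{\ld_2}$ at two nearby values $\ld_1 > \ld_2$ straddling the unique breakpoint $\ld$ in between, and argues both lie in $\Oc^*_\ld$. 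You instead compute the breakpoint $\ld^*$ exactly and take $R_1 = R^-_{\ld^*}$, $R_2 = R^+_{\ld^*}$, so the containment comes immediately from the lattice structure at a single $\ld$, without the cross-$\ld$ nesting lemma. Both routes are valid; yours is arguably a bit more direct once one is willing to compute both extremal optimizers.

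One small technical point: perturbing by $\mp \e c(R)$ yields \emph{cost}-minimal and \emph{cost}-maximal optimizers, which need not coincide with the \emph{set}-minimal $R^-_\ld$ and \emph{set}-maximal $R^+_\ld$ (though the latter do realize the extreme costs, since $R^-_\ld$ is contained in every optimizer and $R^+_\ld$ contains every optimizer). This is harmless: either note that standard submodular-minimization routines can return the unique minimal minimizer directly, or replace any cost-min/cost-max pair $R_1', R_2'$ by $R_1' \cap R_2'$ and $R_1' \cup R_2'$, which remain optimal by the lattice property, are nested, and have the same costs.
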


\section{The tree knapsack problem} \label{tkn} 
We now define the {\em tree knapsack problem}, and devise a simple, clean \mbox{LP-based}
approximation algorithm for this problem (Theorem~\ref{tknthm}, Corollary~\ref{strongtkn}). 
As we show in Section~\ref{mstinter}, the tree knapsack problem nicely abstracts the key
combinatorial problem encountered in extracting a good interdiction set from an
{over-budget} set $R_2$ in case (ii) of Theorem~\ref{bipoint}, 
and our LP-relative guarantees for tree knapsack readily yield improved approximation
{guarantees for MST interdiction.}

In the tree knapsack problem~\cite{JohnsonN83}, we have a tree $\Gm=(\{r\}\cup N,A)$
rooted at node $r$. 
Each node $v\in N$ has a {\em value} $\al_v\geq 0$ and a {\em weight}
$\beta_v\geq 0$, and we have a budget $B$.  
We say that a subset $S\subseteq N$ of nodes is {\em downwards-closed} if for every 
$v\in S$, all children of $v$ are also in $S$. The goal is to find a maximum-value
downwards-closed set $S\sse N$ (so $r\notin S$) such that $\sum_{v\in S}\beta_v\leq B$.
Observe that the (standard) knapsack problem is precisely the special case of tree
knapsack where the underlying tree is a star (rooted at its center). Throughout, we use
$v$ to index nodes in $N$. For $S\sse N$ and a vector $\rho\in\R^N$, we use $\rho(S)$ to
denote  $\sum_{v\in S}\rho_v$.  

The following 
is a natural LP-relaxation for the tree knapsack problem involving variables $x_v$ for all
$v$. Let $\child(v)$ denote the set of children of node $v$.  
\begin{alignat}{3}
\max & \quad & \sum_v \al_vx_v & \tag{TK-P} \label{tknlp} \\
\text{s.t.} && x_v & \leq x_u \qquad && \frall v, \frall u\in\child(v) 
\label{const:down} \\
&& \sum_v \beta_vx_v & \leq B, 
&& 0 \leq x_v \leq 1 \quad \frall v. \notag 
\end{alignat}
Tree knapsack was first defined by~\cite{JohnsonN83} who devised an FPTAS for this problem 
via dynamic programming. However, for our purposes, we need an approximation
guarantee relative to the above LP, which was not known previously.

The main result of this section is as follows. We say that $C\sse N$ is a {\em chain} if for
every two distinct nodes in $C$, one is a descendant of the other. 

\begin{theorem} \label{tknthm}
We can compute in polytime an integer solution to \eqref{tknlp} of value at least
$\OPT_{\text{\ref{tknlp}}}-\max_{\text{chains $C\sse N$}}\alpha(C)$.
\end{theorem}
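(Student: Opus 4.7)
The plan is to solve the LP \eqref{tknlp}, obtain a basic optimal solution $x^*$, and apply iterative rounding to convert it into an integer feasible solution with the required LP-relative guarantee.

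The first step is a structural lemma on extreme points of \eqref{tknlp}. Let $F:=\{v : 0<x^*_v<1\}$ denote the set of fractional coordinates. Using a standard tight-constraints argument at the extreme point $x^*$, I would show that $F$ induces a connected subtree of $\Gm$, all of whose coordinates take a common value $\tau\in(0,1)$, and that the budget constraint is tight. The counting goes as follows: of the $|N|$ linearly independent tight constraints at $x^*$, exactly $|N|-|F|$ come from integrality ($x^*_v\in\{0,1\}$ for $v\notin F$), so the remaining $|F|$ tight constraints must consist of the budget constraint plus tight downward-closure equalities $x_v=x_u$ for $u\in\child(v)$. Since the latter form a subgraph of the tree $\Gm$, pinning down $|F|$ fractional coordinates forces them to be linked into one connected component of tight equalities with a common value, and forces the budget constraint to be tight.

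Using this structure, the algorithm commits every $v$ with $x^*_v=1$ to the output set $S$ (and decreases the budget by $\beta_v$), and discards every $v$ with $x^*_v=0$. If $F=\emptyset$, return $S$. Otherwise, pick a leaf $\ell$ of the fractional subtree, and let $C_\ell$ be the unique path in $F$ from the topmost node of $F$ down to $\ell$; this $C_\ell$ is a chain in $\Gm$. Force every node on $C_\ell$ to value $0$ (equivalently, delete them from $N$) and re-solve \eqref{tknlp} on the reduced instance; then iterate. Because $|F|$ strictly decreases each iteration, the process terminates in polynomial time, and the final $S$ is downwards-closed with $\beta(S)\le B$.

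The main obstacle will be charging the cumulative LP-value loss, across all iterations, to the $\alpha$-value of a single root-to-leaf chain of $\Gm$. In a single iteration, forcing $C_\ell$ to $0$ decreases the LP optimum by at most $\tau\cdot\alpha(C_\ell)\le\alpha(C_\ell)$, so the difficulty is aggregating these per-iteration losses. My plan is to pick $\ell$ at each iteration to lie on a carefully chosen ``skeleton'' root-to-leaf path of $\Gm$ (for instance, by always extending the previous forced chain deeper, using the extreme-point structure to guarantee that the topmost node of the new fractional subtree is a descendant of the previous one), and argue by induction on $|F|$ that the union of all forced chains is contained in a single chain $C^*\sse N$ of the original tree. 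The total loss is then bounded by $\alpha(C^*)\le\max_{\text{chains }C\sse N}\alpha(C)$, yielding the claimed guarantee. The delicate bookkeeping is to verify this nesting property, which I anticipate is the most technical part of the proof and will lean heavily on the extreme-point characterization established in the first step.
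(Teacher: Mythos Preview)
Your high-level plan---iterative rounding driven by the extreme-point structure of \eqref{tknlp}---is the right one, and your structural lemma (that the fractional support $F$ at a vertex of \eqref{tknlp} is a connected subtree with a common value $\tau$) is in fact correct and slightly stronger than what the paper proves in Lemma~\ref{lem:extreme_point}. The gap is in the charging step. When you force the entire top-to-leaf chain $C_\ell\subseteq F$ to zero and re-solve, the new fractional block $F'$ lives in one of the subtrees of $F\setminus C_\ell$, and those subtrees hang off \emph{arbitrary} internal nodes of $C_\ell$, not off $\ell$ (indeed, $\ell$ is a leaf of $F$, so nothing in $F\setminus C_\ell$ sits below $\ell$). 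Hence the topmost node of $F'$ is a descendant of some $p\in C_\ell$ that may be strictly above $\ell$, and the next forced chain $C_{\ell'}$ branches off $C_\ell$ at $p$ rather than extending it. So the union of forced chains is in general a subtree with branching, not a single chain, and your inductive nesting claim (``the union of all forced chains is contained in a single chain $C^*$'') need not hold; the per-iteration bound $\sum_i\alpha(C_{\ell_i})$ can then exceed $\max_{C}\alpha(C)$.

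The fix is to be less aggressive: instead of zeroing out a whole top-to-leaf chain of $F$, zero out only the \emph{topmost} node $t$ of $F$ (equivalently, the unique child of the current root whose subtree contains fractional mass), and recurse on the tree-knapsack instance with tree $\Gm(t)$, root $t$, and residual budget. Then by construction the next ``dropped'' node is a child of $t$ in its instance, i.e., a strict descendant of $t$ in $\Gm$, and the dropped nodes across iterations automatically form a chain; the total loss is at most $\sum_i\alpha_{t_i}\le\max_{C}\alpha(C)$. This is exactly the paper's proof (by induction on depth), which only needs the weaker structural statement that at most one child of the root has fractional descendants. Your stronger extreme-point lemma is true but not needed once you drop a single node per round.
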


Theorem~\ref{tknthm} nicely generalizes a well-known result about the standard knapsack
problem, namely, that we can always obtain a solution of value at least
(LP-optimum)$-\max_v\al_v$. Notice that when $\Gm$ is a star (i.e., we have a knapsack
instance), this is {\em precisely} the guarantee that we obtain from the theorem.
The proof of Theorem~\ref{tknthm} relies on the following structural result (which
extends a similar result known for knapsack). 
Let $\Gm(v)$ denote the \nolinebreak \mbox{subtree of $\Gm$ rooted at $v$}. 

\begin{lemma} \label{lem:extreme_point}
Let $\bar{x}$ be an extreme-point solution to the linear program \eqref{tknlp}. Then there is
at most one child $v$ of $r$ for which the subtree $\Gm(v)$ contains a fractional node,
i.e., some node $w$ with $0<\bx_w<1$.
\end{lemma}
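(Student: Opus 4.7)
The plan is to argue by contradiction via a standard perturbation of the extreme point $\bx$. I would assume that there exist two distinct children $v_1, v_2$ of $r$ such that each subtree $\Gm(v_i)$ contains a fractional node, and then exhibit a nonzero vector $y \in \R^N$ such that both $\bx + \epsilon y$ and $\bx - \epsilon y$ remain feasible for \eqref{tknlp} for all sufficiently small $\epsilon > 0$; this contradicts $\bx$ being an extreme point.

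For such a $y$ to be admissible, it must satisfy: (a) $y_w = 0$ whenever $\bx_w \in \{0,1\}$; (b) $y_v = y_u$ for every tight down-closure constraint $x_v = x_u$ with $u \in \child(v)$; and (c) $\sum_v \beta_v y_v = 0$ whenever the budget constraint is tight at $\bx$. A preliminary observation I would use repeatedly is that if $u \in \child(v)$ and $\bx_v = \bx_u$, then either both values are fractional or both are integral (since $\bx_v, \bx_u \in [0,1]$); hence, once (a) is imposed, the only tight down-closure constraints relevant for (b) are those between two fractional nodes.

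Next, I would build a basic ``block'' perturbation from each subtree. Call two fractional nodes of $\Gm(v_i)$ equivalent if they are joined by a sequence of parent-child edges each corresponding to a tight down-closure constraint (all intermediate nodes of such a path are automatically fractional, by the preliminary observation). Let $K_i$ be any equivalence class of fractional nodes in $\Gm(v_i)$ (nonempty by hypothesis), and take $y^{(i)} \in \R^N$ to be its $\{0,1\}$-indicator. Then $y^{(i)}$ satisfies (a) trivially, and satisfies (b) because of the closure property of $K_i$: for any tight down-closure constraint between two fractional nodes, either both endpoints lie in $K_i$ or neither does. Crucially, the supports of $y^{(1)}$ and $y^{(2)}$ are disjoint since $\Gm(v_1)$ and $\Gm(v_2)$ are disjoint subtrees.

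To conclude, if the budget constraint is slack at $\bx$, then $y := y^{(1)}$ already yields the contradiction. If the budget is tight, set $y := \mu_1 y^{(1)} + \mu_2 y^{(2)}$; properties (a) and (b) are inherited from the $y^{(i)}$, and (c) collapses to the single linear equation $\mu_1 \beta(K_1) + \mu_2 \beta(K_2) = 0$, which always has a nonzero solution (take $\mu_1 = \beta(K_2)$, $\mu_2 = -\beta(K_1)$, or directly use $y := y^{(i)}$ if $\beta(K_i) = 0$). Disjointness of the supports guarantees $y \neq 0$. I expect the main point that requires care is verifying the closure property of the equivalence class that delivers (b); the remainder of the argument is routine linear algebra on the tight constraint system.
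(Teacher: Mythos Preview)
Your proposal is correct and takes essentially the same approach as the paper. Your equivalence classes $K_i$ coincide with the paper's maximal connected sets $N_j$ of nodes sharing a common fractional value, and the perturbation $y = \mu_1 y^{(1)} + \mu_2 y^{(2)}$ with $\mu_1\beta(K_1)+\mu_2\beta(K_2)=0$ is exactly the direction $d$ the paper constructs; the only cosmetic difference is that you separate the slack-budget case whereas the paper handles both cases uniformly.
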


\begin{proof} 
Suppose for a contradiction that the root $r$ has two children $v_1$ and $v_2$ such that
the subtrees $\Gm(v_1)$ and $\Gm(v_2)$ both contain at least one fractional node. We show
that for some nonzero vector $d \in \mathbb{R}^{N}$, the solutions $\bx\pm d$ are feasible
to \eqref{tknlp}, which contradicts that $\bx$ is an extreme point.

For $j=1, 2$, let $N_j$ be a maximal set of nodes in the subtree $\Gm(v_j)$ such that:
(a) $N_j$ induces a connected subgraph of $\Gm(v_j)$; and 
(b) all nodes in $N_j$ have the same $x_w$ value, which is fractional.
We will always set $d_v=0$ for all $v\notin N_1\cup N_2$.
Note that for any $\mu_1,\mu_2\in\R$ with sufficiently small absolute value, if we set
$d_v=\mu_1$ for all $v\in N_1$ and $d_v=\mu_2$ for all $v\in N_2$, then the vectors 
$x\pm d$ satisfy constraints \eqref{const:down} (due to the maximality of $N_1, N_2$),
and $0\leq (x\pm d)_v\leq 1$ for all $v\in N$. 

We argue that we can choose suitably small $\mu_1, \mu_2$ (not both equal to zero) so that  
$\sum_v\beta_vd_v=0$, and so $x\pm d$ also satisfy the budget constraint, 
and hence are feasible to \eqref{tknlp}. 
If $\beta(N_1)=0$, if we take a sufficiently small $\mu_1>0$ and $\mu_2=0$, then clearly
$\sum_v\beta_vd_v=0$.
Otherwise, for $\e>0$ and suitably small, we take $\mu_1=\e\beta(N_2)$ and
$\mu_2=-\e\beta(N_1)$. Then again,
$\sum_v\beta_vd_v=\e\beta(N_1)\beta(N_2)-\e\beta(N_2)\beta(N_1)=0$  
(so $x\pm d$ is feasible to \eqref{tknlp}).
\end{proof}

\begin{proofof}{Theorem~\ref{tknthm}}
We use {\em iterative rounding}, and the proof is by induction on the depth $d$ of $\Gm$,
which is the maximum number of edges on a root-leaf path. 

If $d=0$, then $N=\es$, and \eqref{tknlp} has no variables and constraints, so the
statement is vacuously true.  
So suppose $d\geq 1$.
Let $x^*$ be an extreme-point optimal solution of \eqref{tknlp}. 
If $x^*$ is integral,
then we obtain value $\opttknlp$, completing the induction step. Otherwise, by
Lemma~\ref{lem:extreme_point}, there is exactly one child $v$ of $r$ such that the subtree $\Gm(v)$  
contains a fractional node. 

Set $\tx'=\restr{x^*}{N\sm\Gm(v)}$, i.e., $x^*$ restricted to $N\sm\Gm(v)$, which is
integral. 
We have $\sum_{u\in N\sm\Gm(v)}\al_u\tx'_u=\opttknlp-\sum_{w\in\Gm(v)}\al_wx^*_w$. 
Now consider the tree knapsack instance defined by the tree $\Gm(v)$ with root $v$, and
budget $B-\sum_{u\in N\sm\Gm(v)}\beta_u\tx'_u$ (and values $\al_w$ and weights $\beta_w$
for all $w\in\Gm(v)\sm\{v\}$). 
Observe that $\restr{x^*}{\Gm(v)\sm\{v\}}$ 
is a fractional solution to the LP-relaxation \eqref{tknlp} corresponding to this tree
knapsack problem, so the optimal value of this LP is at least 
$\sum_{w\in\Gm(v)\sm\{v\}}\al_wx^*_w$. (These objects are null if $\Gm(v)=\{v\}$.)
Thus, since $\Gm(v)$ has depth at most $d-1$, by our induction hypothesis, 
our rounding procedure applied to this tree knapsack instance
yields an integer solution $\tx''\in\{0,1\}^{\Gm(v)\sm\{v\}}$ of value at least 
$\sum_{w\in\Gm(v)\sm\{v\}}\al_wx^*_w-\max_{\text{chains $C\sse\Gm(v)\sm\{v\}$}}\al(C)$.
Thus, taking $\tx=(\tx',\tx_v=0,\tx'')$, we obtain a feasible integer solution to
\eqref{tknlp} having value at least 
\begin{equation*}
\begin{split}
\opttknlp & -\sum_{w\in\Gm(v)}\al_wx^*_w+\sum_{w\in\Gm(v)\sm\{v\}}\al_wx^*_w
-\max_{\text{chains $C\sse\Gm(v)\sm\{v\}$}}\al(C) \\
& \geq \opttknlp-\al_v-\max_{\text{chains $C\sse\Gm(v)\sm\{v\}$}}\al(C) \\
& \geq \opttknlp-\max_{\text{chains $C\sse N$}}\al(C) \ .
\end{split}
\end{equation*}
This completes the induction step, and hence the proof of the theorem.
\end{proofof}

We remark that (as is standard) the iterative-rounding procedure in Theorem~\ref{tknthm}
is in fact combinatorial, since when we move to the subtree $\Gm(v)$, we only need to
move from $\restr{x^*}{\Gm(v)\sm\{v\}}$ to an extreme-point of the LP 
of the smaller tree-knapsack instance of no smaller value (instead of obtaining an optimal LP
solution), which can be done combinatorially 
{(as in the proof of Lemma~\ref{lem:extreme_point}).}

We now prove a somewhat stronger version of Theorem~\ref{tknthm} that will be useful in
Section~\ref{mstinter}, where we utilize tree knapsack to solve the
\mbox{MST-interdiction} problem. 
The depth of a node $v$ is the number of edges on the (unique) $r$-$v$ path of $\Gm$. 
Let $\level_i(\Gm)$ be the set of nodes of $\Gm$ at depth $i$; we drop $\Gm$ if it is
clear from the context. For a chain $C$ of $\Gm$, let $C_i$ denote $C\cap\level_i(\Gm)$;
note that $|C_i|\leq 1$.

\begin{corollary} \label{strongtkn}
We can obtain in polytime an integer solution $\tx$ to \eqref{tknlp} of value at least
$\OPT_{\text{\ref{tknlp}}}
-\max_{\text{chains $C\sse N$}}\bigl\{\sum_{i\geq 1:\tx(\level_i)<|\level_i|}\alpha(C_i)\bigr\}$.
\end{corollary}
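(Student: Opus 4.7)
The plan is to rerun the induction of Theorem~\ref{tknthm} essentially verbatim, but with a refined charging: instead of charging a universal $\al_v$ at every rounding step, I would charge it only to levels of $\Gm$ that end up ``bad'' under the final integer solution $\tx$, meaning $\tx(\level_i)<|\level_i|$. The key observation that makes this possible is that whenever our rounding zeros out a fractional variable $x^*_v$ for a child $v$ of $r$, the level of $\Gm$ containing $v$ is automatically bad under $\tx$, since $v\in\level_1(\Gm)$ and $\tx_v=0$; moreover, this ``badness'' property propagates correctly through the recursion into the subtree $\Gm(v)$.

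I would proceed by induction on the depth $d$ of $\Gm$, exactly as in Theorem~\ref{tknthm}. The base case $d=0$ is vacuous, and if the extreme-point LP solution $x^*$ is already integral, then $\tx=x^*$ incurs no loss. Otherwise, Lemma~\ref{lem:extreme_point} yields the unique child $v$ of $r$ whose subtree $\Gm(v)$ contains a fractional node; I set $\tx_v=0$ and recurse on $\Gm(v)$ with the residual budget, producing an integer solution $\tx''$ on $\Gm(v)\sm\{v\}$. The same arithmetic as in Theorem~\ref{tknthm} then lower-bounds the value of $\tx$ by $\opttknlp-\al_v-L$, where $L$ is the loss of $\tx''$ relative to the LP-optimum on $\Gm(v)$. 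Applying the induction hypothesis to $\Gm(v)$ (which has strictly smaller depth) gives
\[
L \;\leq\; \max_{C''\text{ chain in }\Gm(v)\sm\{v\}}\,\sum_{j\geq 1:\,\tx''(\level_j(\Gm(v)))<|\level_j(\Gm(v))|}\al(C''_j).
\]

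The crux, and essentially the only step that departs from the proof of Theorem~\ref{tknthm}, is lifting this recursive bound back to the levels of the original tree $\Gm$. Here I would use the identity $\level_j(\Gm(v))=\level_{j+1}(\Gm)\cap\Gm(v)$ together with the fact that $\tx$ agrees with $\tx''$ on $\Gm(v)\sm\{v\}$: these imply that any level $j$ that is bad in $\Gm(v)$ (under $\tx''$) forces level $j+1$ of $\Gm$ to be bad under $\tx$, since the offending node witnesses badness at both scales. Additionally, level $1$ of $\Gm$ is bad under $\tx$ because $\tx_v=0$ and $v\in\level_1(\Gm)$. Taking a chain $C''^{*}$ that attains the recursive maximum and defining $C:=\{v\}\cup C''^{*}$, I would observe that $C$ is a chain in $N$ with $C_1=\{v\}$ and $C_i=C''^{*}_{i-1}$ for $i\geq 2$. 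Then the sum $\sum_{i\geq 1:\,\tx(\level_i)<|\level_i|}\al(C_i)$ picks up $\al_v$ at the bad level $1$ and picks up $\al(C''^{*}_{j})$ at each $i=j+1$ for which $\tx''(\level_j(\Gm(v)))<|\level_j(\Gm(v))|$, so the sum is at least $\al_v+L$, matching the required bound.

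The step I expect to be the main obstacle is precisely this careful translation between bad levels of $\Gm(v)$ and bad levels of $\Gm$: one has to check that ``badness'' moves correctly under the recursion and that the chain $\{v\}\cup C''^{*}$ is counted at the right levels. This rests on the elementary inclusion $\level_j(\Gm(v))\sse\level_{j+1}(\Gm)$ together with nonnegativity of the $\al_v$'s (so that including additional bad levels in the $\Gm$-sum only helps), so once these bookkeeping pieces are stated cleanly, the induction closes with essentially no extra work beyond the proof of Theorem~\ref{tknthm}.
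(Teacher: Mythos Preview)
Your proposal is correct and takes essentially the same approach as the paper: induction on depth, recursing into the unique fractional subtree $\Gm(v)$, and lifting the recursive bound by adjoining $v$ to the optimal chain from the subtree while tracking which levels become ``bad.'' You are in fact slightly more careful than the paper on one point: the paper asserts an \emph{equality} between the chain-sum in $\Gm$ and $\al_v$ plus the chain-sum in $\Gm(v)$, whereas (as you implicitly note via nonnegativity of the $\al_v$'s) only the inequality $\geq$ actually holds in general---but that is precisely the direction required, so both arguments go through.
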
 

\begin{proof} 
The result follows from the proof of Theorem~\ref{tknthm} via a more-careful accounting.
Recall that we use induction on the depth $d$ of $\Gm$. The base case when $d=0$ is again
vacuously true. So suppose $d\geq 1$, and let $x^*$ be an extreme-point optimal solution to
\eqref{tknlp}. If $x^*$ is integral, we are done, so suppose that there is a child $v$ of
$r$ such that $\Gm(v)$ contains a fractional node.
In the sequel, $\level_i$ denotes $\level_i(\Gm)$.

As before, let $\tx'=\restr{x^*}{N\sm\Gm(v)}$, $\tx_v=0$, and let $\tx''$ be the integer
solution obtained by induction for the tree knapsack instance defined by the tree $\Gm(v)$
with root $v$ and budget $B-\sum_{u\in N\sm\Gm(v)}\beta_u\tx'_u$. Therefore, $\tx''$ has
value at least  
\begin{equation}
\sum_{w\in\Gm(v)\sm\{v\}}\al_wx^*_w
-\max_{\text{chains $C\sse\Gm(v)\sm\{v\}$}}
\sum_{\substack{i\geq 1: \\ \tx''(\level_i(\Gm_v))<|\level_i(\Gm_v)|}}\al(C_i).
\label{corineq2}
\end{equation}
Thus, $\tx=(\tx',\tx_v,\tx'')$ is a feasible integer solution to \eqref{tknlp} of value at
least 
\begin{equation*}
\begin{split}
\opttknlp & -\sum_{w\in\Gm(v)}\al_wx^*_w+\sum_{w\in\Gm(v)\sm\{v\}}\al_wx^*_w
-\max_{\text{chains $C\sse\Gm(v)\sm\{v\}$}}
\sum_{\substack{i\geq 1: \\ \tx''(\level_i(\Gm_v))<|\level_i(\Gm_v)|}}\al(C_i) \\
& \geq \opttknlp-\al_v
-\max_{\text{chains $C\sse\Gm(v)\sm\{v\}$}}
\sum_{\substack{i\geq 1: \\ \tx''(\level_i(\Gm_v))<|\level_i(\Gm_v)|}}\al(C_i) \\
& \geq \opttknlp-\max_{\text{chains $C\sse N$}}
\sum_{\substack{i\geq 1: \\ \tx(\level_i)<|\level_i|}}\alpha(C_i).
\end{split}
\end{equation*}
The last inequality above follows by noting that for any chain $C\sse \Gm(v) \sm \{v\}$, letting $C' := C \cup \{v\}$ (which is also a chain), we have
$$
\sum_{\substack{i\geq 1: \\ \tx(\level_i)<|\level_i|}}\alpha(C'_i)=
\al_v+\sum_{\substack{i\geq 1: \\ \tx''(\level_i(\Gm_v))<|\level_i(\Gm_v)|}}\al(C_i). 
$$ 
This completes the induction step, and hence the proof.
\end{proof}

\section{MST interdiction} \label{mstinter}
Our main technical result is the following theorem.

\begin{theorem} \label{mstinterthm} \label{theo:MST_interdiction}
There is a $4$-approximation algorithm for MST interdiction. 
\end{theorem}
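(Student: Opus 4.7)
The plan is to invoke Theorem~\ref{bipoint} and handle the non-trivial bipoint case. There I obtain $\lambda \geq 0$ and nested optimal solutions $R_1 \subseteq R_2$ to \eqref{lagp} with $c(R_1) < B < c(R_2)$. Writing $B = a\,c(R_1) + b\,c(R_2)$ with $a+b=1$ and $a,b \geq 0$, optimality of $R_1, R_2$ in \eqref{lagp} at parameter $\lambda$ yields the standard bipoint bound $a\,\val(R_1) + b\,\val(R_2) \geq \OPT$. The task from here is to extract an interdiction set $R$ with $R_1 \subseteq R \subseteq R_2$ whose value recovers a constant factor of this bound.

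The core step is to encode the extraction problem as a tree knapsack instance, as previewed in the introduction. I form the laminar family $\bigcup_{i=0}^{k} \A_i$, where $\A_i$ is the set of connected components of $(V, E_{\leq i}\setminus R_2)$, and view it as a rooted tree $\Gamma$ with root $V \in \A_k$ and leaves the singletons in $\A_0$. Each non-root $A \in \A_i$ becomes an item of value $\alpha_A := w_{i+1} - w_i$ (so that, by Lemma~\ref{lem:val}, $\val(R)$ decouples as a sum of $\alpha$'s across selected components) and weight $\beta_A := c(\delta(A) \cap E_i)$ (the per-level proxy which is exact once selections are downward-closed), with budget $B$. I would then prove a combinatorial lemma that translates any downward-closed $\Sc \subseteq N$ with $\beta(\Sc) \leq B$ into an interdiction set $R(\Sc)$ satisfying $R_1 \subseteq R(\Sc) \subseteq R_2$, $c(R(\Sc)) \leq B$, and $\val(R(\Sc)) \geq \alpha(\Sc) - w_k$. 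The nontrivial content here is verifying that $\beta_A$ correctly charges, edge-by-edge, for each cross-component edge in $R_2 \setminus R_1$ exactly once when $\Sc$ is downward-closed.

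Next I would lower-bound $\opttknlp$ via the bipoint. The convex combination $x = a\cdot\mathbbm{1}_{\text{$R_1$-induced nodes}} + b\cdot\mathbbm{1}_{\text{all non-root nodes}}$ is feasible for \eqref{tknlp}: its $\beta$-weight is at most $a\,c(R_1) + b\,c(R_2) = B$, and its $\alpha$-value evaluates (after telescoping via Lemma~\ref{lem:val}) to $a\,(\val(R_1)+w_k) + b\,(\val(R_2)+w_k) \geq \OPT + w_k$. Rounding via Corollary~\ref{strongtkn} and observing that $\alpha$-values along any chain $C \subseteq N$ telescope to at most $\sum_{i=0}^{k-1}(w_{i+1} - w_i) = w_k$, I obtain a downward-closed integer solution $\tilde\Sc$ of value $\alpha(\tilde\Sc) \geq \opttknlp - w_k \geq \OPT$. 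The combinatorial lemma then gives $\val(R(\tilde\Sc)) \geq \OPT - w_k$.

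Since $w_k \leq \OPT$ (Claim~\ref{ass:connected}(i)), the above bound is vacuous when $w_k$ is close to $\OPT$. The final step is therefore to combine $R(\tilde\Sc)$ with a second candidate that dominates in the regime $w_k \approx \OPT$, namely a solution of value $\Omega(w_k)$ obtained by directly exploiting $R_1$ (whose residual graph is already guaranteed to carry $w_k$-weight through $E_k$-edges). Returning the better of the two and optimally balancing the bipoint coefficients $a, b$ yields the stated factor $4$. I expect the main obstacle to be the combinatorial lemma of paragraph~two — in particular, showing that the exact equality $\val(R(\Sc)) + w_k = \alpha(\Sc)$ (modulo controlled slack) holds for downward-closed $\Sc$, which is what justifies treating $\beta_A = c(\delta(A) \cap E_i)$ as an honest proxy for the interdiction cost — and, secondarily, the arithmetic that balances $\val(R(\tilde\Sc))$ against the $R_1$-based candidate to land on the sharp constant $4$ rather than the $5$ obtained by focusing on $R_2$ alone.
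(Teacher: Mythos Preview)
Your proposal has a genuine gap precisely at the point you flag as the main obstacle. The claim that $\beta_A = c(\delta(A)\cap E_i)$ charges each cross-component edge ``exactly once'' is false in the direction you need for LP feasibility. Any edge $e\in E_i$ that crosses between two components of $\A_i$ lies in $\delta(A)\cap E_i$ for \emph{both} of those components, so $\sum_{A\in\A_i}\beta_A$ double-counts such edges and $\beta(N)$ can be as large as $2c(R_2)$ (this is exactly Lemma~\ref{lem:translate}(i): $\beta(N')/2\le c(R(N'))\le\beta(N')$). The inequality $c(R(\Sc))\le\beta(\Sc)$ that you need for feasibility of the \emph{rounded} set is fine; what fails is the feasibility of your fractional point in paragraph three. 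The $\beta$-weight of $b\cdot\mathbbm{1}_N$ is only bounded by $2b\,c(R_2)$, so your convex combination has weight up to $2B$, not $B$. (Separately, ``$R_1$-induced nodes'' is ill-defined in $\Gamma$: the components of $(V,E_{\leq i}\setminus R_1)$ are coarsenings of the $R_2$-components and do not correspond to single nodes of $\Gamma$, nor does any natural indicator of them satisfy \eqref{const:down}.) If your paragraph-three bound $\opttknlp\ge\OPT+w_k$ held, the argument you give would yield a $2$-approximation, which should already signal that something is off.

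The paper absorbs the factor~$2$ by taking $\hat x_q=b/2$ for all $q\in N$ (Lemma~\ref{lem:frac}), which gives only $\opttknlp\ge\tfrac{b}{2}\,h(R_2)$; combined with the chain bound and the candidates $R_1$ and a min-cut of $(V,E_{\leq k-1})$, this yields factor~$5$ (Theorem~\ref{5apxthm}), not $4$. Getting to $4$ requires the additional machinery of Section~\ref{4apx}: one replaces $\beta$ by $\beta^{\new}_{v^{A,i}}=c\bigl((\delta(A)\setminus R_1)\cap E_i\bigr)$ with budget $B-c(R_1)$, and, crucially, exploits the partition of each level of $\Gamma$ induced by the components of $(V,E_{\leq i}\setminus R_1)$ through Lemma~\ref{newtkn}, which credits an extra ``$+1$'' per $R_1$-component rather than merely per level. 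Your paragraph-four sketch of ``a solution of value $\Omega(w_k)$ obtained by directly exploiting $R_1$'' does not capture this; using $R_1$ only as a separate candidate cannot undo the factor-$2$ loss and stalls at~$5$.
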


The above guarantee substantially improves the previous-best approximation ratio of $14$
obtained by~\cite{Zenklusen15}. 
Also, notably and significantly, our algorithm and analysis, which are
based on the tree knapsack problem introduced in Section~\ref{tkn}, are noticeably
simpler and cleaner than the one in~\cite{Zenklusen15}. 
Improved guarantees for MST interdiction readily follow from (Theorem~\ref{tknthm} and)
Corollary~\ref{strongtkn} and Lemma~\ref{newtkn}, yielding approximation ratios of $5$ and
$4$ respectively for MST interdiction (see Theorem~\ref{5apxthm} and Section~\ref{4apx}).  
The proof below shows a slightly worse guarantee of $5$ but introduces the main underlying 
ideas. Section~\ref{4apx} discusses the refinement needed to obtain the $4$-approximation.  
 
Our algorithm follows the same high-level outline as the one in~\cite{Zenklusen15}. 
As mentioned earlier, we consider the Lagrangian problem \eqref{lagp}, 
$\max_{R\sse E_{\leq k-1}} f_\ld(R):=\val(R)-\ld c(R)$, obtained by dualizing 
the budget constraint $c(R)\leq B$. We then utilize Theorem~\ref{bipoint}. If this returns 
an optimal solution, then we are done. So assume in the sequel that Theorem~\ref{bipoint}
returns $\ld \geq 0$ and two optimal solutions $R_1$ and $R_2$ to \eqref{lagp} such that
$R_1\sse R_2$ and $c(R_1)<B<c(R_2)$. 

For $R\sse E_{\leq k-1}$, define 
$h(R):=\sum_{i =0}^{k-1} \sigma\left(E_{\le i}\setminus R\right)({w}_{i+1}-{w}_i)=\val(R)+w_k$. 
Let $R^*\sse E_{\leq k-1}$ denote an optimal
solution to the MST-interdiction problem, so $\OPT=h(R^*)-w_k$. Let $a,b\geq 0$ such that
$a+b=1$ and $ac(R_1)+bc(R_2)=B$. Then, since 
$\val(R_1)-\ld c(R_1)=\val(R_2)-\ld c(R_2)\geq\OPT-\ld B$, we have 
$ah(R_1)+bh(R_2)\geq h(R^*)$. We establish our approximation guarantee by comparing the
value of our solution against the upper bound $ah(R_1)+bh(R_2)-w_k$.  
The following claim shows that this upper bound is precisely the optimal value of the
Lagrangian relaxation of the MST interdiction problem, which is 
$\UB:=\min_{\ld'\geq 0}\bigl(\ld' B+\max_{R\sse E_{\leq k-1}}f_{\ld'}(R)\bigr)$.
Complementing our $4$-approximation, in Section~\ref{lbound}, we prove a lower bound of
$3$ on the approximation ratio achievable relative to $\UB$.

\begin{claim}
\label{claim:def_ub}
We have $ah(R_1) + bh(R_2) - w_k = \UB$.
\end{claim}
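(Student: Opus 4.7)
The plan is to establish the equality $\UB=ah(R_1)+bh(R_2)-w_k$ by proving the two inequalities separately, via short computations that exploit the identity $f_{\ld'}(R)=h(R)-w_k-\ld'c(R)$ together with the defining relations $a+b=1$ and $ac(R_1)+bc(R_2)=B$.

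For the direction $\UB\leq ah(R_1)+bh(R_2)-w_k$, I would evaluate the expression $\ld' B+\max_{R\sse E_{\leq k-1}}f_{\ld'}(R)$ at the specific $\ld'=\ld$ supplied by Theorem~\ref{bipoint}. Since $R_1,R_2\in\Oc^*_\ld$, we have $f_\ld(R_1)=f_\ld(R_2)=\max_R f_\ld(R)$, and this common value equals its own convex combination $af_\ld(R_1)+bf_\ld(R_2)$. Expanding each $f_\ld(R_j)$ as $h(R_j)-w_k-\ld c(R_j)$ and using $ac(R_1)+bc(R_2)=B$, the $\ld$-terms combine to $-\ld B$, which cancels against the leading $\ld B$, leaving $ah(R_1)+bh(R_2)-w_k$. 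Since $\UB$ is the minimum over $\ld'\geq 0$, this gives the desired upper bound.

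For the reverse direction $\UB\geq ah(R_1)+bh(R_2)-w_k$, I would fix an arbitrary $\ld'\geq 0$ and use the trivial bound $\max_R f_{\ld'}(R)\geq af_{\ld'}(R_1)+bf_{\ld'}(R_2)$ (the maximum dominates any convex combination of evaluations, since $a,b\geq 0$ with $a+b=1$). The same expansion and cancellation as above, using $ac(R_1)+bc(R_2)=B$ once more, shows that $\ld' B+\max_R f_{\ld'}(R)\geq ah(R_1)+bh(R_2)-w_k$; taking the infimum over $\ld'\geq 0$ then yields the desired lower bound, and combining both directions completes the proof.

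There is no real obstacle here: the claim reduces to a two-line computation once one notices that the weighting $(a,b)$ is precisely what is needed to turn the pair $\ld'c(R_1),\ld'c(R_2)$ into $\ld' B$ so that it cancels against the Lagrangian penalty term. The only point worth flagging is that the upper-bound half genuinely requires both $R_1$ and $R_2$ to be optimal for \eqref{lagp} at the \emph{same} $\ld$, which is exactly what Theorem~\ref{bipoint} guarantees, whereas the lower-bound half uses no optimality of $R_1,R_2$ and holds for every $\ld'\geq 0$.
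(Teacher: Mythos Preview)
Your proposal is correct and follows essentially the same approach as the paper's proof: both directions are handled exactly as you describe, with the paper writing things in terms of $\val(R)$ rather than $h(R)-w_k$, but the computation and the key observations (optimality of $R_1,R_2$ at $\ld$ for one direction, and the max dominating a convex combination for the other, with $ac(R_1)+bc(R_2)=B$ providing the cancellation in both) are identical.
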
 

\begin{proof} 
Let $\OPT(\text{P}_{\ld'})$ denote the optimal value of the subproblem
\begin{equation}
\max_{R\sse E_{\leq k-1}} \quad f_{\ld'}(R):=\val(R)-\ld' c(R). \tag{P$_{\ld'}$} 
\end{equation}
Define $\eta(\ld') :=  \OPT(\text{P}_{\ld'}) + \ld' B$. So we have $\UB = \min_{\ld' \ge 0} \eta(\ld')$.
We have
\begin{align*}
ah(R_1) + bh(R_2) - w_k & = a\cdot\val(R_1)+b\cdot\val(R_2)
= a(\val(R_1) - \ld c(R_1)) + b(\val(R_2) - \ld c(R_2)) + \ld B \\
& = (a + b) OPT(\text{P}_{\ld}) + \ld B = \eta(\ld) \ge \UB.
\end{align*}

We now argue that $\UB\geq ah(R_1)+bh(R_2)-w_k$ by showing that $\eta(\ld') \ge \eta(\ld)$
for every $\ld' \ge 0$. We have
\begin{align*}
\eta(\ld') & = \OPT(\text{P}_{\ld'}) + \ld'B \ge 
\max\bigl\{\val(R_1) - \ld' c(R_1),\val(R_2)-\ld' c(R_2)\bigr\}+ \ld'B \\ 
& \geq a(\val(R_1)-\ld' c(R_1))+b(\val(R_2)-\ld' c(R_2))+\ld' B \\
& = a\cdot\val(R_1)+b\cdot\val(R_2) = \eta(\ld). \qedhere
\end{align*}
\end{proof}

\paragraph{Translation to tree knapsack.}
We now describe how the problem of combining $R_1$ and $R_2$ to extract a good, feasible
interdiction set can be captured by a suitable instance of the tree knapsack problem
defined in Section~\ref{tkn}. 

For $i=0,\ldots,k$, let $\mathcal{A}_i \subseteq 2^V$ be the partition of $V$ induced by
the connected components of the multigraph $(V, E_{\le i} \setminus R_2)$. 
Thus, $\A_k=\{V\}$ and $\A_0=\{\{v\}: v\in V\}$. 
The multiset $\bigcup_{i=0}^k\A_i$, where we include $S\sse V$ multiple times if
it lies in multiple $\A_i$s, is a {\em laminar family} (i.e., any two sets in the
collection are either disjoint or one is contained in the other). 
This laminar family can naturally be viewed as a rooted tree, which defines the tree
$\Gm$ in the tree knapsack problem.  
Taking a cue from Lemma~\ref{lem:val}, we build our interdiction
set $R$ by selecting a suitable collection of sets from this laminar family, ensuring that
if we pick some $A\in\A_i$, then we include all edges of $\dt(A)\cap E_{\leq i}$ in $R$
and create $A$ as a component of $(V,E_{\leq i}\sm R)$ (and hence
contribute $w_{i+1}-w_i$ to $h(R)$). 
Formally, the tree $\Gm$ has a node $v^{A,i}$ for every component $A\in\A_i$ and
all $i=0,\ldots,k$. For $i>0$, the children of $v^{A,i}$ are the nodes 
$\{v^{S,i-1}: S\in\A_{i-1}, S\sse A\}$. Thus, $\Gm$ has depth $k$ and root
$r=v^{V,k}$. Recall that $L_i:=L_i(\Gm)$ denotes the set of nodes of $\Gm$ at depth $i$,
which correspond to the components in $\A_{k-i}$ here. Let $N$ be the set of non-root
nodes of $\Gm$. 

For a node $v^{A,i}\in N$ (so $0\leq i<k$), define its value 
$\alpha_{v^{A, i}}:=w_{i+1} - w_i$.   
Let $R(v^{A,i}):=\delta(A)\cap E_i$ (which is $\es$ for every leaf $v^{A,0}$). 
Define the weight of $v^{A,i}$ to be $\beta_{v^{A,i}}:=c\bigl(R(v^{A,i})\bigr)$. 
For $N'\sse N$, let $R(N'):=\bigcup_{q\in N'}R(q)$. Observe that $R(N) \subseteq R_2$.
We set the budget of the tree-knapsack instance to $B$, the 
{budget for MST interdiction.}  

The intuition is that we want to encode that picking node $v^{A,i}$ corresponds to
creating component $A$ in the multigraph $(V,E_{\leq i}\sm R)$, where $R$ is our
interdiction set, in which case $\al_{v^{A,i}}$ gives the contribution from $A$ to
$h(R)$. However, in order to pay for the interdiction cost $c\bigl(\dt(A)\bigr)$ incurred,
we need to take the $\beta_q$ weights of all nodes $q$ in the subtree rooted at $v^{A,i}$.
Therefore, we insist that if we pick $v^{A,i}$ then we pick all its descendants (i.e., 
we pick a downwards-closed set of nodes), and then 
$\sum_{q\in\Gm(v^{A,i})}\al_q$ 
gives the contribution from the components created to $h(R)$. 
Lemma~\ref{lem:translate} formalizes this intuition,
and shows that if $N'\sse N$ is a downwards-closed set of nodes, 
then $\beta(N')$ and $\al(N')$ are good proxies (roughly speaking) for 
the interdiction cost $c\bigl(R(N')\bigr)$ incurred and $h\bigl(R(N')\bigr)$
respectively. 

\begin{lemma} \label{lem:translate_cost} \label{lem:translate} \label{lem:translate_value1} 
Let $N' \subseteq N$ be downwards closed, and $R=R(N')$. Then 
\begin{enumerate}[(i), topsep=0.25ex, itemsep=0.25ex]
\item $\beta(N')/2\le c(R)\le\beta(N')$; and
\item $h(R)=\val(R)+w_k\geq\alpha(N') + \sum_{0 \le i \le k - 1: L_{k-i}\sm N'\neq\es} (w_{i+i} - w_i)$.
\end{enumerate}
\end{lemma}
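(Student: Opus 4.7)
The plan is to prove (i) and (ii) separately, exploiting the level-by-level structure of $\Gm$. For any non-root node $v^{A,j}$ we have $R(v^{A,j}) \subseteq E_j$, so the contributions of different levels $j$ to $R = R(N')$ are drawn from disjoint edge sets. Throughout, I write $N'_i := N' \cap L_{k-i}$, i.e., the subset of $N'$ indexing components of $\A_i$; then $\alpha(N') = \sum_{i=0}^{k-1} |N'_i|\,(w_{i+1}-w_i)$.

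For (i), the upper bound $c(R) \le \beta(N')$ is a union bound. For the lower bound I use a simple double-counting: each $e \in R$ lies in a unique $E_j$, and at level $j$ it appears in $\dt(A)$ for at most the two components $A\in\A_j$ containing its endpoints. Hence $\beta(N') = \sum_{q \in N'} c(R(q))$ charges each edge of $R$ at most twice, yielding $\beta(N') \le 2c(R)$.

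The substance of the lemma is (ii). Since $R \cap E_k = \emptyset$ and $(V,E_k)$ is connected by Claim~\ref{ass:connected}(iii), Lemma~\ref{lem:val} gives $h(R) = \sum_{i=0}^{k-1}\sigma(E_{\le i}\setminus R)(w_{i+1}-w_i)$, so it suffices to show that for every $0\le i \le k-1$, $\sigma(E_{\le i}\setminus R) \ge |N'_i|$, with an additional $+1$ whenever $L_{k-i}\not\subseteq N'$. The key structural claim is that whenever $v^{A,i}\in N'$, the set $A$ is a connected component of $(V,E_{\le i}\setminus R)$. Connectedness is inherited from $(V,E_{\le i}\setminus R_2)$ via $R \subseteq R_2$. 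For separation, i.e.\ $\dt(A)\cap E_{\le i}\subseteq R$, take $e \in \dt(A)\cap E_j$ with $j \le i$. Then $e$ must lie in $R_2$, for otherwise it would merge $A$ with a neighboring $\A_i$-component, contradicting $A \in \A_i$. Letting $B\in\A_j$ be the component of $(V, E_{\le j}\setminus R_2)$ containing $e$'s endpoint in $A$, the components of $(V,E_{\le j}\setminus R_2)$ refine those of $(V,E_{\le i}\setminus R_2)$, so $B\subseteq A$; hence $v^{B,j}$ is a descendant of $v^{A,i}$ in $\Gm$ and lies in $N'$ by downward-closure, giving $e \in R(v^{B,j})\subseteq R$. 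Distinct $v^{A,i} \in N'_i$ then yield disjoint components of $(V, E_{\le i}\setminus R)$, accounting for $|N'_i|$ of them; and if in addition $L_{k-i}\not\subseteq N'$, any $A' \in \A_i$ with $v^{A',i}\notin N'$ is disjoint from every such $A$, so any vertex of $A'$ sits in yet another component, supplying the extra $+1$.

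The main obstacle is the separation step in (ii): it needs precisely the downward-closure of $N'$, combined with the refinement of $\A_j$-components into $\A_i$-components for $j \le i$, to ``push down'' the cost of the cut $\dt(A)$ to the deeper descendants that actually contain the responsible edges. Everything else is straightforward bookkeeping.
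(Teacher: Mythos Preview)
Your proof is correct and follows essentially the same approach as the paper's: the double-counting for part (i) and the level-by-level component count for part (ii) are exactly what the paper does. You actually supply more detail than the paper in justifying that $\dt(A)\cap E_{\le i}\sse R$ (the paper asserts this in one line from downward-closure, whereas you trace the edge down to the descendant $v^{B,j}$), and in checking that Lemma~\ref{lem:val} applies; otherwise the arguments are the same.
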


\begin{proof}
Each edge in $R$ appears in at least one, and at most two, of the sets 
$\{R(q)\}_{q \in N'}$, so we obtain 
$\frac{1}{2} \sum_{q\in N'} c\bigl(R(q)\bigr)\le c(R)\le\sum_{q \in N'}c\bigl(R(q)\bigr)$. 
Part (i) follows by noting that $\beta(N')=\sum_{q \in N'} c\bigl(R(q)\bigr)$. 

For part (ii), consider an index $0 \le i \le k - 1$.
Since $N'$ is downwards closed, for every node $v^{A, i} \in N'$, all descendants of
$v^{A,i}$ are in $N'$; so $R\supseteq\dt(A) \cap E_{\le i}$ and $A$ is a connected component of
the multigraph $(V, E_{\le i} \setminus R)$. Further, note that if 
$L_{k-i}\sm N'\neq\es$, then the sets $\{A : v^{A,i} \in N'\}$ do not
cover $V$ entirely, and so $(V, E_{\le i} \setminus R)$ must have at least
one additional connected component. It follows that $(V,E_{\leq i}\sm R)$ always has at
least $\min\bigl\{|N' \cap\level_{k-i}| + 1, |\level_{k-i}|\bigr\}$ connected components.  
Plugging this in Lemma~\ref{lem:val} yields the result.
\end{proof}

\begin{lemma} \label{lem:fractional_solution} \label{lem:frac}
The vector $\hx:=\bigl(\hx_q=\frac{b}{2}\bigr)_{q\in N}$ is a feasible solution to
\eqref{tknlp} for the above tree-knapsack instance $(\Gm,\{\alpha_q\},\{\beta_q\},B)$. 
Hence, 
$\opttknlp\geq\frac{b}{2}\cdot h(R_2)$. 
\end{lemma}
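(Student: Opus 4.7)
The plan is to verify that $\hx$ satisfies all constraints of \eqref{tknlp} for the tree-knapsack instance $(\Gm,\{\al_q\},\{\beta_q\},B)$, and then evaluate its objective value to get the desired lower bound on $\opttknlp$. Since $b\in[0,1]$, we have $0\leq\hx_q=b/2\leq 1$. The downwards-closed constraints \eqref{const:down} are immediate because $\hx$ is constant across $N$, so $\hx_v=\hx_u$ for every $v$ and $u\in\child(v)$. Hence all the ``structural'' constraints hold trivially, and only the budget constraint and the objective computation require work.

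For the budget, I would apply Lemma~\ref{lem:translate}(i) with $N'=N$ (which is vacuously downwards-closed) to get $\beta(N)\leq 2c\bigl(R(N)\bigr)$. Combining this with the observation $R(N)\sse R_2$ (noted in the setup) gives $\beta(N)\leq 2c(R_2)$. Therefore
\[
\sum_{q\in N}\beta_q\hx_q=\tfrac{b}{2}\beta(N)\leq b\cdot c(R_2)=B-a\cdot c(R_1)\leq B,
\]
where the equality uses $ac(R_1)+bc(R_2)=B$ and the final inequality uses $a,c(R_1)\geq 0$. So $\hx$ is feasible for \eqref{tknlp}.

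For the objective value, the key identity to establish is $\al(N)=h(R_2)$. By the construction of $\Gm$, the nodes at depth $k-i$ are exactly $\{v^{A,i}:A\in\A_i\}$, so $|L_{k-i}|=|\A_i|=\sg(E_{\leq i}\sm R_2)$, and each such node has value $w_{i+1}-w_i$. Summing over the non-root levels $i=0,\ldots,k-1$ gives
\[
\al(N)=\sum_{i=0}^{k-1}|\A_i|(w_{i+1}-w_i)=\sum_{i=0}^{k-1}\sg(E_{\leq i}\sm R_2)(w_{i+1}-w_i)=h(R_2),
\]
using the definition of $h$. Therefore $\sum_{q\in N}\al_q\hx_q=\tfrac{b}{2}\al(N)=\tfrac{b}{2}h(R_2)$, and since $\hx$ is feasible, $\opttknlp\geq\tfrac{b}{2}h(R_2)$.

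There is no real obstacle here—the lemma is essentially a verification that the particular scaling $b/2$ has been chosen to make two slacks align: the factor $1/2$ absorbs the double-counting in Lemma~\ref{lem:translate}(i) (each interdicted edge lies in $\dt(A)\cap E_i$ for at most two components $A$), while the factor $b$ uses that $bc(R_2)\leq B$ by the bipoint construction. The only point worth stating carefully is the identity $\al(N)=h(R_2)$, which matches the definition of $\al_{v^{A,i}}$ against Lemma~\ref{lem:val}'s expression for $h(R_2)$ via the correspondence between depth-$(k-i)$ nodes of $\Gm$ and components of $(V,E_{\leq i}\sm R_2)$.
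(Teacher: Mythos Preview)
Your proof is correct and follows essentially the same approach as the paper: verify the box and downwards-closed constraints trivially, use Lemma~\ref{lem:translate}(i) with $N'=N$ together with $R(N)\sse R_2$ and $ac(R_1)+bc(R_2)=B$ for the budget, and then compute $\al(N)=h(R_2)$ to get the objective bound. The only difference is that you spell out the identity $\al(N)=h(R_2)$ level-by-level, whereas the paper simply asserts it.
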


\begin{proof}
It is clear that $\hx$ satisfies \eqref{const:down}, and $0\leq\hx_q\leq 1$ for all $q\in
N$. 
Applying Lemma~\ref{lem:translate_cost} to $N'=N$ (which is indeed downwards-closed),
we obtain $\beta(N)\le 2c\bigl(R(N)\bigr) \le 2c(R_2)$.
So $\sum_{q\in N}\beta_q\hx_q \le b\cdot c(R_2)\leq a\cdot c(R_1)+b\cdot c(R_2)=B$.
Finally, $\opttknlp$ is at least the objective value of $\hx$, which is
$\frac{b}{2}\cdot\al(N)=\frac{b}{2}\cdot h(R_2)$.
\end{proof}

Given this translation between the tree-knapsack and MST-interdiction problems, it is easy
to see that Corollary~\ref{strongtkn} (coupled with Lemmas~\ref{lem:translate}
and~\ref{lem:frac}) yields the following guarantee, which directly leads to an improved
approximation guarantee of $5$ for MST interdiction (see Claim~\ref{calcclm}).

\begin{lemma} \label{tkncor} 
We can obtain a feasible interdiction set $R$ such that 
$h(R)\geq\frac{b}{2}\cdot h(R_2)$. 
\end{lemma}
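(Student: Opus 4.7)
The plan is to translate the integer solution produced by Corollary~\ref{strongtkn} back into an interdiction set via the dictionary $N'\mapsto R(N')$ from the tree-knapsack instance $(\Gm,\{\alpha_q\},\{\beta_q\},B)$ associated with $R_2$. Concretely, I would run the iterative-rounding procedure of Corollary~\ref{strongtkn} on this instance to obtain an integer solution $\tx$, set $N':=\{q\in N:\tx_q=1\}$, and return $R:=R(N')$. Constraint~\eqref{const:down} forces $N'$ to be downwards closed, so Lemma~\ref{lem:translate_cost}(i) together with $\beta(N')=\sum_q\beta_q\tx_q\le B$ (the LP budget constraint) yields $c(R)\le\beta(N')\le B$, confirming that $R$ is a feasible interdiction set for the MST-interdiction problem.

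Next, I would unpack the chain-penalty term in the guarantee from Corollary~\ref{strongtkn}. The key structural observation is that every node of $\Gm$ at depth $i$ corresponds to some component in $\A_{k-i}$, and therefore all nodes in $L_i$ share the same $\alpha$-value $w_{k-i+1}-w_{k-i}$. Hence for every chain $C\sse N$ we have $\alpha(C_i)\in\{0,\,w_{k-i+1}-w_{k-i}\}$, and the penalty $\max_{C}\sum_{i\ge 1:\tx(L_i)<|L_i|}\alpha(C_i)$ is attained by taking $C$ to be any root-to-leaf path (minus the root). That maximum equals $\sum_{i\ge 1:L_i\sm N'\ne\es}(w_{k-i+1}-w_{k-i})$, and reindexing with $j=k-i$ rewrites it as $\sum_{0\le j\le k-1:\,L_{k-j}\sm N'\ne\es}(w_{j+1}-w_j)$.

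Finally, I would combine this with Lemma~\ref{lem:translate_value1}(ii), which gives $h(R)\ge\alpha(N')+\sum_{0\le i\le k-1:\,L_{k-i}\sm N'\ne\es}(w_{i+1}-w_i)$. Since $\alpha(N')\ge\opttknlp-(\text{chain penalty})$ by Corollary~\ref{strongtkn}, the ``extra'' sum in Lemma~\ref{lem:translate_value1}(ii) cancels the chain-penalty term exactly, yielding $h(R)\ge\opttknlp$. Applying Lemma~\ref{lem:frac} then gives $h(R)\ge\tfrac{b}{2}\cdot h(R_2)$, as desired.

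The main subtlety is the exact cancellation above, which is precisely what necessitates using the strengthened Corollary~\ref{strongtkn} in place of Theorem~\ref{tknthm}. A worst-case chain in $N$ could span \emph{all} levels $1,\ldots,k$ and would contribute $\alpha$-values even at levels where $N'$ already covers every component; such terms could not be absorbed by the extra sum in Lemma~\ref{lem:translate_value1}(ii), which only contributes at levels where $L_{k-i}\sm N'\ne\es$. Restricting the penalty to levels where $\tx(L_i)<|L_i|$, as in Corollary~\ref{strongtkn}, is exactly what makes the two terms line up and forces the desired cancellation.
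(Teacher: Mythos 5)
Your proposal is correct and follows essentially the same route as the paper's proof: round via Corollary~\ref{strongtkn}, take $R=R(N')$, and observe that the chain penalty (maximized by a root-to-leaf path, since all nodes at a given depth share the same $\alpha$-value) exactly equals the extra sum $\sum_{0\le i\le k-1:\,L_{k-i}\sm N'\ne\es}(w_{i+1}-w_i)$ in Lemma~\ref{lem:translate}(ii), so the two terms cancel and Lemma~\ref{lem:frac} finishes the argument. Your discussion of why the strengthened Corollary~\ref{strongtkn} is needed (rather than Theorem~\ref{tknthm}) matches the paper's intent as well.
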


\begin{proof}
This is consequence of Corollary~\ref{strongtkn}, Lemma~\ref{lem:translate},
and Lemma~\ref{lem:fractional_solution}. 
Let $N'\sse N$ be the downwards-closed set
corresponding to the integer solution returned by Corollary~\ref{strongtkn}. Let
$R=R(N')$. We have 
$\al(N')\geq\frac{b}{2}\cdot h(R_2)
-\max_{\text{chains $C\sse N$}}\bigl\{\sum_{i\geq 1:\tx(\level_i)<|\level_i|}\alpha(C_i)\bigr\}$ 
by Corollary~\ref{strongtkn} and Lemma~\ref{lem:fractional_solution}.
To complete the proof, we apply part (ii) of Lemma~\ref{lem:translate}, noting that 
\begin{equation*}
\sum_{0 \le i \le k - 1: L_{k-i}\sm N'\neq\es} (w_{i+i} - w_i)
=\max_{\text{chains $C\sse N$}}\bigl\{\sum_{i\geq 1:\tx(\level_i)<|\level_i|}\alpha(C_i)\bigr\}
\qedhere
\end{equation*}
\end{proof}

\begin{claim} \label{calcclm}
We have $\max\bigl\{w_k,h(R_1)-w_k,\frac{b}{2}\cdot h(R_2)-w_k\bigr\}\geq\UB/5\geq\OPT/5$. 
\end{claim}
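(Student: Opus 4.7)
The plan is to combine the three candidate lower bounds into a nonnegative linear combination that reproduces $\UB$, with total coefficient at most $5$. Once such a combination is in hand, the claim follows immediately, since if $M$ denotes the maximum of the three quantities and $\lambda_1,\lambda_2,\lambda_3\ge 0$ with $\lambda_1+\lambda_2+\lambda_3\le 5$ satisfy
\begin{equation*}
\lambda_1\cdot w_k+\lambda_2\cdot(h(R_1)-w_k)+\lambda_3\cdot\bigl(\tfrac{b}{2}h(R_2)-w_k\bigr)\ge \UB,
\end{equation*}
then $5M\ge(\lambda_1+\lambda_2+\lambda_3)M\ge\UB$. The second inequality $\UB\ge\OPT$ is already available from Claim~\ref{claim:def_ub}: any feasible solution $R^*$ to MST-interdiction certifies $\OPT(\text{P}_{\lambda'})\ge\OPT-\lambda' B$, and hence $\eta(\lambda')\ge\OPT$ for every $\lambda'\ge 0$, so $\UB=\min_{\lambda'\ge 0}\eta(\lambda')\ge\OPT$.

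First I would unfold $\UB=ah(R_1)+bh(R_2)-w_k$ and try to match coefficients. The coefficients of $h(R_1)$ and $h(R_2)$ on the right are $a$ and $b$ respectively, so I must take $\lambda_2\ge a$ (to cover $a\cdot h(R_1)$ on the left through the term $\lambda_2(h(R_1)-w_k)$) and $\lambda_3\ge 2$ (so that $\tfrac{\lambda_3 b}{2}\ge b$, covering the $b\cdot h(R_2)$ term). The coefficient of $w_k$ on the right is $-1$, so balancing yields the constraint $\lambda_1\ge\lambda_2+\lambda_3-1$. The natural tight choice is therefore $\lambda_2=a$, $\lambda_3=2$, $\lambda_1=a+1$, all nonnegative since $a\in[0,1]$.

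Second, I would verify the identity by direct expansion:
\begin{equation*}
(a+1)w_k+a(h(R_1)-w_k)+2\bigl(\tfrac{b}{2}h(R_2)-w_k\bigr)=a\cdot h(R_1)+b\cdot h(R_2)-w_k=\UB,
\end{equation*}
and observe that the total weight is $\lambda_1+\lambda_2+\lambda_3=2a+3\le 5$, since $a\le 1$. This establishes $5M\ge\UB\ge\OPT$, which is exactly the claim.

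There is no serious obstacle here: the only real task is to locate the correct coefficients, which amounts to solving the tiny (feasible) system of three inequalities above while minimizing their sum. The slack $5-(2a+3)=2(1-a)=2b$ indicates, in hindsight, that the worst case is when $a=1$, i.e., when $R_1$ alone carries essentially all of the upper bound, which is consistent with the $5$-approximation being tight in the forthcoming analysis and motivates the refinement in Section~\ref{4apx} that also uses $R_1$ more cleverly to push the factor down to $4$.
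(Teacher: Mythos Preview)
Your proof is correct and is essentially the same as the paper's: your unnormalized coefficients $(\lambda_1,\lambda_2,\lambda_3)=(a+1,a,2)=(2-b,1-b,2)$ are exactly the ones the paper uses, just without dividing through by their sum $5-2b$. The paper writes the bound as $M\ge\UB/(5-2b)\ge\UB/5$ via a convex combination, which is the same inequality you obtain from $5M\ge(2a+3)M\ge\UB$.
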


\begin{proof} 
We have
\begin{align*}
\max\Bigl\{w_k,h(R_1)-w_k,\tfrac{b}{2}\cdot h(R_2)-w_k\Bigr\}
& \geq \tfrac{2-b}{5-2b}\cdot w_k+\tfrac{1-b}{5-2b}\cdot\bigl(h(R_1)-w_k\bigr) \\
& \qquad \qquad +\tfrac{2}{5-2b}\cdot\bigl(\tfrac{b}{2}\cdot h(R_2)-w_k\bigr) \\
& =\frac{1}{5-2b}\Bigl(ah(R_1)+bh(R_2)-w_k\Bigr) \\
& =\frac{\UB}{5-2b}\geq\UB/5\geq\OPT/5. \qedhere
\end{align*}
\end{proof}

\begin{theorem} \label{5apxthm}
There is a $5$-approximation algorithm for MST interdiction. 
\end{theorem}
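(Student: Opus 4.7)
The plan is to assemble the tools built up in this section into a three-candidate algorithm and argue via Claim~\ref{calcclm}. First I would invoke Theorem~\ref{bipoint}. If it returns an optimal solution directly, we are done. Otherwise we obtain $\ld \ge 0$ and optimal solutions $R_1\sse R_2$ to \eqref{lagp} with $c(R_1)<B<c(R_2)$, and scalars $a,b\ge 0$ satisfying $a+b=1$ and $a\,c(R_1)+b\,c(R_2)=B$; Claim~\ref{claim:def_ub} then gives $a\,h(R_1)+b\,h(R_2)-w_k=\UB\ge\OPT$.

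In this non-trivial case, I would compute three feasible interdiction sets, one aligned with each term in the maximum of Claim~\ref{calcclm}. The first is $R_1$ itself, which is feasible since $c(R_1)<B$ and achieves $\val(R_1)=h(R_1)-w_k$. The second is any feasible set $R_0\sse E_{\le k-1}$ whose removal disconnects $(V,E_{\le k-1})$; by the minimality of $k$ in Claim~\ref{ass:connected}, some cut $\dt(S)$ of $(V,E_{\le k-1})$ satisfies $c\bigl(\dt(S)\cap E_{\le k-1}\bigr)\le B$, and any spanning tree of $G-R_0$ must use some edge of weight at least $w_k$ to cross this cut, so $\val(R_0)\ge w_k$. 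The third is the set $R$ produced by Lemma~\ref{tkncor}, which achieves $\val(R)=h(R)-w_k\ge\frac{b}{2}h(R_2)-w_k$. The algorithm outputs the best of these three sets.

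By Claim~\ref{calcclm}, the best of these three values is at least $\OPT/5$, which establishes the claimed approximation ratio. I do not expect a real obstacle: all the substantive work has already been carried out in Theorem~\ref{bipoint}, Lemma~\ref{tkncor}, and Claim~\ref{calcclm}. The only mild point requiring care is the efficient construction of $R_0$, which can be done by computing, for each pair $s\neq t$, a minimum $c$-cost $s$-$t$ cut in $(V,E_{\le k-1})$ and returning any such cut of cost at most $B$; the minimality of $k$ guarantees that one will be found.
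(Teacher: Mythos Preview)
Your proposal is correct and follows essentially the same approach as the paper: invoke Theorem~\ref{bipoint}, and in the nontrivial case output the best of $R_1$, a budget-feasible cut disconnecting $(V,E_{\le k-1})$, and the interdiction set from Lemma~\ref{tkncor}, then apply Claim~\ref{calcclm}. Your treatment is in fact slightly more explicit than the paper's, which simply cites ``a min-cut of $(V,E_{\le k-1})$'' for the $w_k$-candidate without spelling out the computation.
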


\begin{proof} 
If Theorem~\ref{bipoint} returns an optimal solution, we are done. Otherwise, we
return the best among a min-cut of $(V,E_{\leq k-1})$ (which has value at least $w_k$),
the set $R_1$, and the interdiction set returned by Lemma~\ref{tkncor}. The proof now
follows from Claim~\ref{calcclm}.  
\end{proof}

\subsection{Improvement to the guarantee stated in Theorem~\ref{mstinterthm}} 
\label{4apx} 

The improved approximation guarantee of $4$ comes from the fact that instead of focusing only
on $R_2$, 
we now {\em interpolate} between $R_1$ and $R_2$ to obtain our interdiction set $R$,
i.e., we return $R$ such that $R_1\sse R\sse R_2$. Since we always include $R_1$, we change
the definition of the tree-knapsack instance that we create accordingly. The tree $\Gm$
and the node weights $\{\al_q\}$ are unchanged; the weight of $v^{A,i}$ is now 
$\beta^\new_{v^{A,i}}:=c\bigl(R^\new(v^{A,i})\bigr)$, where 
$R^\new(v^{A,i}):=R(v^{A,i})\sm R_1=\bigl(\dt(A)\sm R_1\bigr)\cap E_i$,
and our budget is $B^\new:=B-c(R_1)$. For $N'\sse N$, define
$R^\new(N'):=R_1\cup\bigcup_{q\in N'}R^\new(q)$. Observe that $R^\new(N) \subseteq R_2$.

Since $R_1\sse R_2$, each component $U$ of $(V,E_{\leq i}\sm R_1)$ is a union of
components of $(V,E_{\leq i}\sm R_2)$, and hence, maps to a subset $S$ of the nodes of
$\Gm$ at depth $k-i$. We exploit the fact that since we include $R_1$ in our 
interdiction set, if we pick $\ell$ nodes from $S$, then we create $\min\{\ell+1,|S|\}$
components within $U$; this $+1$ term that we accrue (roughly speaking) from all
components of $(V,E_{\leq j}\sm R)$ over all $j=0,\ldots,k-1$ is the source of our
improvement. 

The following variant of Corollary~\ref{strongtkn} exploits the structure of the
tree-knapsack instance obtained from the MST-interdiction problem, which we then utilize
to obtain an interdiction set with an improved bound on $h(R)$ (Lemma~\ref{newinter}).

\begin{lemma} \label{newtkn}
Let $\bigl(\Gm,\{\al_v\},\{\beta_v\},B\bigr)$ be an instance of the tree knapsack problem
such that $\al_v=\al^{(i)}$ for all $v\in\level_i(\Gm)$ and all $i\geq 1$.
Let $\prt_i$ be a partition of $\level_i(\Gm)$ for all $i\geq 1$.
Let $\tht\in[0,1]$ be such that $(\hx_q=\tht)_{q\in N}$ is a feasible solution to
\eqref{tknlp}. 
We can obtain in polytime an integer solution $\tx$ to \eqref{tknlp} such that
$$
\sum_{i\geq 1}\sum_{S\in\prt_i}\al^{(i)}\min\{\tx(S)+1,|S|\}
\geq\sum_{i\geq 1}\al^{(i)}|\level_i|\tht
+\sum_{i\geq 1:|\prt_i|>1}\al^{(i)}\Bigl((1-\tht)|\prt_i|-1\Bigr).
$$
\end{lemma}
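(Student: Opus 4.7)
The plan is to proceed by induction on the depth $d$ of $\Gm$, mirroring the iterative-rounding framework used in Corollary~\ref{strongtkn}. The base case $d=0$ is vacuous since $L_i$ and $\prt_i$ are empty for $i\geq 1$ and both sides of the inequality are zero. For the inductive step, I would take an extreme-point optimal solution $x^*$ of \eqref{tknlp}; since $\hx=\theta\bon$ is feasible, we have $\opttknlp=\sum_v\al_vx^*_v\geq\theta\sum_i\al^{(i)}|L_i|$. If $x^*$ is integral, set $\tx=x^*$; otherwise Lemma~\ref{lem:extreme_point} supplies a unique child $v$ of the root $r$ whose subtree $\Gm(v)$ contains a fractional node, and following Corollary~\ref{strongtkn} I would set $\tx_q=x^*_q$ for $q\in N\sm\Gm(v)$ (an integer assignment), $\tx_v=0$, and recurse on $\Gm(v)\sm\{v\}$ with residual budget $B-\sum_{u\in N\sm\Gm(v)}\be_u\tx_u$, the same value of $\theta$ (whose feasibility is inherited from $\hx$), and the restrictions of the partitions $\prt_i$ to the remaining nodes at each level. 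This yields a polynomial-time algorithm.

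For the analysis, I would exploit the identity $\min\{\tx(S)+1,|S|\}=\tx(S)+\mathbf{1}[\tx(S)<|S|]$ to rewrite the LHS of the claim as $\sum_v\al_v\tx_v+\sum_{i\geq 1}\al^{(i)}n_i$, where $n_i:=|\{S\in\prt_i:\tx(S)<|S|\}|$. At each recursive step that zeroes a node $v$ at depth $i^*$ we lose $\al^{(i^*)}$ in the $\sum_v\al_v\tx_v$ term, but the partition class $S\ni v$ at depth $i^*$ automatically becomes incomplete (since $\tx_v=0$), contributing $+\al^{(i^*)}$ to $\al^{(i^*)}n_{i^*}$; thus the chain cost of the iterative rounding is exactly absorbed by the partition-indicator gain at each touched level.

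The main obstacle is the partition-aware bound at those levels $i$ \emph{not} touched by the zeroed chain, on which $\tx$ takes the LP-integer values inherited from $x^*$. Here one must establish
$$\sum_{S\in\prt_i}\min\{\tx(S)+1,|S|\}\geq\theta|L_i|+[|\prt_i|>1]\bigl((1-\theta)|\prt_i|-1\bigr).$$
For $|\prt_i|=1$ this reduces to $\min\{\tx(L_i)+1,|L_i|\}\geq\theta|L_i|$, which holds because $\tx(L_i)$ tracks the fractional $\theta|L_i|$ up to a rounding error of at most one. For $|\prt_i|>1$, the delicate trade-off is that whenever many partition classes are completed (so $n_i$ is small), $\tx(L_i)$ must itself exceed $\theta|L_i|$ by enough to cover the $(1-\theta)|\prt_i|-1$ surcharge, whereas whenever $n_i$ is large, the $\al^{(i)}n_i$ term directly carries the extra credit. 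Formalizing this trade-off — by a careful level-by-level accounting of how $x^*$ distributes mass across partition classes subject to downward closure, and by invoking the induction hypothesis at the right granularity on the recursive sub-instance — is the bulk of the technical work.
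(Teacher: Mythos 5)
Your high-level framework (iterative rounding via Lemma~\ref{lem:extreme_point}, the identity $\min\{\tx(S)+1,|S|\}=\tx(S)+\mathbf{1}[\tx(S)<|S|]$, and absorbing the chain loss into the incomplete-class indicators) is reasonable, but there is a genuine gap exactly where you flag ``the bulk of the technical work,'' and the per-level inequality you propose to establish there is false. The rounding only controls the \emph{aggregate} value $\sum_v\al_v\tx_v$; it gives no level-by-level guarantee. In particular, for a level with $|\prt_i|=1$ your claim $\min\{\tx(\level_i)+1,|\level_i|\}\geq\tht|\level_i|$ can fail outright: an extreme-point optimum $x^*$ may place essentially no mass on some level (it only needs total objective at least $\tht\sum_i\al^{(i)}|\level_i|$), so $\tx(\level_i)$ does not ``track $\tht|\level_i|$ up to a rounding error of at most one.'' The same issue defeats the $|\prt_i|>1$ case: at a level where $\tx(\level_i)$ is small the surplus $(1-\tht)|\prt_i|-1$ must be paid by the number $n_i$ of incomplete classes, at a level where $n_i$ is small it must be paid by $\tx(\level_i)$, and nothing in your construction forces either quantity to be large at any particular level.

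The missing idea---and the paper's actual proof---is to \emph{modify the objective before rounding} rather than re-run the induction. Define $\al'$ by picking, at each level with $|\prt_i|>1$, one node $v_S$ in each class $S\in\prt_i$ and setting $\al'_{v_S}=0$ (keeping $\al'_v=\al_v$ elsewhere). Since $\tx(S\sm\{v_S\})\leq\min\{\tx(S),|S|-1\}$, every class satisfies $\min\{\tx(S)+1,|S|\}\geq\tx(S\sm\{v_S\})+1$, whence the left-hand side $g(\tx)$ is at least $\sum_v\al'_v\tx_v+\sum_{i:|\prt_i|>1}\al^{(i)}|\prt_i|+\sum_{i:|\prt_i|=1,\ \tx(\level_i)<|\level_i|}\al^{(i)}$: the deterministic ``$+1$ per class'' exactly compensates for the zeroed node. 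Now apply Corollary~\ref{strongtkn} \emph{as a black box} to the instance with values $\al'$: the LP benchmark (evaluated at $\hx=\tht\bon$) is $\sum_i\al^{(i)}|\level_i|\tht-\sum_{i:|\prt_i|>1}\al^{(i)}|\prt_i|\tht$, and the chain penalty contributes at most $\al^{(i)}$ per level and only at levels with $\tx(\level_i)<|\level_i|$, where it is absorbed as in your proposal. Combining yields the claimed bound, with the surplus $(1-\tht)|\prt_i|$ arising as $|\prt_i|-|\prt_i|\tht$. This value-transfer step---trading $\al^{(i)}|\prt_i|\tht$ of LP benchmark for a deterministic gain of $\al^{(i)}|\prt_i|$---is precisely what your level-by-level accounting lacks and cannot recover.
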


\begin{proof} 
Define $g(x):=\sum_{i\geq 1}\sum_{S\in\prt_i}\al^{(i)}\min\{x(S)+1,|S|\}$.
As usual $\level_i$ denotes $\level_i(\Gm)$.
Define $\al'\in\R_+^N$ as follows. For each $i$ such that $|\prt_i|=1$, set $\al'_v=\al_v$
for all $v\in\level_i$. For each $i$ with $|\prt_i|>1$ and each $S\in\prt_i$,
pick some node $v_S\in S$; set $\al'_{v_S}=0$ and $\al'_v=\al_v$ for all 
$v\in S\sm\{v_S\}$. 
We claim that for $\tx\in\{0,1\}^{N}$, we have 
\begin{equation}
g(\tx)\geq\sum_{v\in N}\al'_v\tx_v
+\sum_{\substack{i\geq 1: |\prt_i|=1, \\ \tx(\level_i)<|\level_i|}}\al^{(i)}
+\sum_{i\geq 1:|\prt_i|>1}\al^{(i)}|\prt_i| \ . \label{newtknineq1}
\end{equation}
To see this, consider any level $i\geq 1$. If $|\prt_i|=1$, the total contribution from
this level to $g(\tx)$ is $\al^{(i)}\min\{\tx(\level_i)+1,|\level_i|\}$, which is the same
as the contribution from this level to the RHS of \eqref{newtknineq1}.
If $|\prt_i|>1$, consider each set $S\in\prt_i$. The contribution from $S$ to $g(\tx)$ is  
$\al^{(i)}\min\{\tx(S)+1,|S|\}$, and the contribution from $S$ to the RHS of
\eqref{newtknineq1} is $\al^{(i)}\bigl(\tx(S\sm\{v_S\})+1\bigr)$, which is no larger.

To complete the proof, note that by Corollary~\ref{strongtkn}, we can obtain an integer
solution $\tx$ to \eqref{tknlp} such that 
\begin{equation*}
\begin{split}
\sum_{v\in N}\al'_v\tx_v & \geq 
\Bigl(\max_{\text{feasible solutions $x$ to \eqref{tknlp}}}\sum_{v\in N}\al'_vx_v\Bigr)
-\max_{\text{chains $C\sse N$}}\sum_{i\geq 1:\tx(\level_i)<|\level_i|}\al'(C_i) \\
& \geq \sum_{v\in N}\al'_v\hx_v-\sum_{i\geq 1:\tx(\level_i)<|\level_i|}\al^{(i)} \\
& = \sum_{i\geq 1}\al^{(i)}|\level_i|\tht-\sum_{i\geq 1:|\prt_i|>1}\al^{(i)}|\prt_i|\tht
-\sum_{i\geq 1:\tx(\level_i)<|\level_i|}\al^{(i)}.
\end{split}
\end{equation*}
Therefore, 
\begin{align*}
g(\tx) & \geq \sum_{v\in N}\al'_v\tx_v
+\sum_{\substack{i\geq 1: |\prt_i|=1, \\ \tx(\level_i)<|\level_i|}}\al^{(i)}
+\sum_{i\geq 1:|\prt_i|>1}\al^{(i)}|\prt_i| \\
& \geq \sum_{i\geq 1}\al^{(i)}|\level_i|\tht
+\sum_{i\geq 1:|\prt_i|>1}\al^{(i)}|\prt_i|(1-\tht)
-\sum_{\substack{i\geq 1: |\prt_i|>1, \\ \tx(\level_i)<|\level_i|}}\al^{(i)} \\
& \geq \sum_{i\geq 1}\al^{(i)}|\level_i|\tht
+\sum_{i\geq 1:|\prt_i|>1}\al^{(i)}\Bigl((1-\tht)|\prt_i|-1\Bigr). \qedhere
\end{align*}
\end{proof}

\begin{lemma} \label{newinter}
We can obtain a feasible interdiction set $R$ such that  
$h(R)\geq\frac{a}{2}\cdot h(R_1)+\frac{b}{2}\cdot h(R_2)-\frac{a}{2}\cdot w_k$.
\end{lemma}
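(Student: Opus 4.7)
The plan is to invoke Lemma~\ref{newtkn} on the modified tree-knapsack instance $\bigl(\Gm,\{\al_q\},\{\beta^\new_q\},B^\new\bigr)$ with a suitable partition family $\{\prt_i\}$ and scalar $\tht$, then translate the resulting integer solution back to an interdiction set via the map $R^\new(\cdot)$. Since $R_1\sse R_2$, every component $A\in\A_{k-i}$ (equivalently, every node in $\level_i(\Gm)$) is contained in a unique component of $(V,E_{\leq k-i}\sm R_1)$, and I take $\prt_i$ to be the partition of $\level_i$ induced by this containment. To make $\hx=(b/2)_{q\in N}$ LP-feasible, I would observe that each edge of $R_2\sm R_1$ of weight $w_i$ lies in $R^\new(v^{A,i})$ for at most two $A\in\A_i$ (at its two endpoints), so $\sum_q\beta^\new_q\leq 2c(R_2\sm R_1)$; together with the computation $B^\new=B-c(R_1)=b\cdot c(R_2\sm R_1)$ (using $ac(R_1)+bc(R_2)=B$ and $a+b=1$), this yields $\sum_q\beta^\new_q\hx_q\leq B^\new$. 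So Lemma~\ref{newtkn} applies with $\tht=b/2$; let $\tx$ be its output, $N':=\{q:\tx_q=1\}$ (downwards-closed), and $R:=R^\new(N')$. Feasibility is immediate: $c(R)\leq c(R_1)+\sum_{q\in N'}\beta^\new_q\leq c(R_1)+B^\new=B$.

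The key translation step is that $h(R)$ equals the left-hand side of the Lemma~\ref{newtkn} bound, i.e., $h(R)=\sum_{i\geq 1}\sum_{S\in\prt_i}\al^{(i)}\min\{\tx(S)+1,|S|\}$. For each $0\leq i\leq k-1$ and each block $S\in\prt_{k-i}$ (corresponding to a component $U$ of $(V,E_{\leq i}\sm R_1)$), every chosen $v^{A,i}\in N'\cap S$ contributes its edges $R^\new(v^{A,i})=(\dt(A)\sm R_1)\cap E_i$ to $R$; combined with $R_1\sse R$, this completely severs $A$ from $U\sm A$ at level $i$. So $U$ decomposes in $(V,E_{\leq i}\sm R)$ into the $|N'\cap S|$ chosen components plus a single leftover component (which vanishes only when $\tx(S)=|S|$), giving $\sigma(E_{\leq i}\sm R)=\sum_{S\in\prt_{k-i}}\min\{\tx(S)+1,|S|\}$. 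Plugging into Lemma~\ref{lem:val} yields the claimed formula for $h(R)$.

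Finally I would handle the algebra. Using $\sum_i\al^{(i)}|\level_i|=h(R_2)$, $\sum_i\al^{(i)}|\prt_i|=h(R_1)$, and $\sum_i\al^{(i)}=w_k$, and setting $T:=\{i\geq 1:|\prt_i|>1\}$, the Lemma~\ref{newtkn} bound (with $\tht=b/2$) simplifies to
\begin{equation*}
h(R)\geq\tfrac{b}{2}h(R_2)+\bigl(1-\tfrac{b}{2}\bigr)h(R_1)-w_k+\tfrac{b}{2}\sum_{i\geq 1,\,i\notin T}\al^{(i)}.
\end{equation*}
Subtracting the target $\tfrac{a}{2}h(R_1)+\tfrac{b}{2}h(R_2)-\tfrac{a}{2}w_k$ and using $a+b=1$, the gap reduces to $\tfrac{1}{2}\val(R_1)-\tfrac{b}{2}\sum_{i\in T}\al^{(i)}$, which is nonnegative via the structural identity $\val(R_1)=h(R_1)-w_k=\sum_{i\in T}\al^{(i)}(|\prt_i|-1)\geq\sum_{i\in T}\al^{(i)}\geq b\sum_{i\in T}\al^{(i)}$ (using $|\prt_i|\geq 2$ for $i\in T$ and $b\leq 1$). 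The main obstacle is spotting this identity: Lemma~\ref{newtkn} naively delivers coefficient $1-b/2$ (rather than $a/2$) on $h(R_1)$ and a loss of $w_k$ (rather than $\tfrac{a}{2}w_k$), and it is only the identity $\val(R_1)=\sum_{i\in T}\al^{(i)}(|\prt_i|-1)$ that lets the excess $h(R_1)$-contribution absorb the extra $\tfrac{b+1}{2}w_k$ of slack down to $\tfrac{a}{2}w_k$.
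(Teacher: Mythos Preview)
Your plan is the paper's own argument: same modified tree-knapsack instance $(\Gm,\{\al_q\},\{\beta^\new_q\},B^\new)$, same partitions $\prt_i$ induced by the $R_1$-components, same $\tht=b/2$, and the closing algebra is equivalent---the paper applies the per-level inequality $|\prt_i|\bigl(1-\tfrac{b}{2}\bigr)-1\geq\tfrac{a}{2}|\prt_i|$ for $|\prt_i|\geq 2$, which is exactly your identity $\val(R_1)=\sum_{i\in T}\al^{(i)}(|\prt_i|-1)\geq\sum_{i\in T}\al^{(i)}$ distributed term by term.

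Two minor corrections, neither of which breaks the proof. First, the claimed \emph{equality} $h(R)=\sum_{i,S}\al^{(i)}\min\{\tx(S)+1,|S|\}$ should only be an inequality: the unselected leftover of a block $S$ need not be a single component of $(V,E_{\leq i}\sm R)$, since the $(E_{\leq i}\sm R_1)$-path between two unselected $A,A'\in S$ may pass through a selected component that has been fully cut off. Only $h(R)\geq g(\tx)$ holds in general (this is what the paper proves), and that is all you use downstream. Second, ``$R^\new(v^{A,i})$ combined with $R_1$'' does not by itself sever a selected $A$ at level $i$: you also need the lower-weight crossing edges $(\dt(A)\sm R_1)\cap E_{\leq i-1}$ to be in $R$, and these are supplied by the descendants of $v^{A,i}$ (all of which lie in $N'$ by downward-closedness). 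You state that $N'$ is downwards-closed but should invoke it explicitly at this step.
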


\begin{proof} 
We apply Lemma~\ref{newtkn} to the tree-knapsack instance given
by $\bigl(\Gm,\{\al_q\},\{\beta^\new_q\},B^\new\bigr)$, but
we need to specify the partitions $\prt_i$ for all levels $i\geq 1$, and the value
$\tht$. 

For $i=0,\ldots,k$, let ${\mathcal{B}}_i$ denote the partition of $V$ induced by the
connected components of the multigraph $(V, E_{\le i} \setminus R_1)$. Since 
$R_1\sse R_2$, the partition $\A_i$ refines (not necessarily strictly) the partition
$\B_i$ for all $i=0,\ldots,k$.
The components in $\B_{k-i}$ therefore naturally induce a partition $\prt_i$ of the nodes of $\Gm$ at depth $i$, consisting of the sets 
$\{v^{A,k - i}: A\in \mathcal{A}_{k-i}, A\subseteq S\}_{S\in\B_{k-i}}$. 

We apply Lemma~\ref{newtkn} to the tree-knapsack instance
$\bigl(\Gm,\{\al_q\},\{\beta^\new_q\},B^\new\bigr)$, taking
$\al^{(i)}=w_{k-i+1}-w_{k-i}$ and $\prt_i$ to be the partition defined above, for all
$i=1,\ldots,k$, and $\tht=\frac{b}{2}$. 
We need to show that $\hx:=(\hx_q=\tht)_{q \in N}$ is a feasible solution to \eqref{tknlp} for this
tree-knapsack instance. This follows because 
$\beta^\new(N) = 2c\bigl(\bigcup_{q\in N}R^\new(q)\bigr) \le 2\bigl(c(R_2)-c(R_1)\bigr)$  
and we have $(1-b)\cdot c(R_1)+b\cdot c(R_2)=B$, so 
$$
\sum_q\beta^\new_q\hx_q=\tht\beta^\new(N)\leq b\bigl(c(R_2)-c(R_1)\bigr)=B-c(R_1)=B^\new.
$$

Let $\tx$ be the integer solution returned by Lemma~\ref{newtkn}, which specifies a
downwards-closed set $N'\sse N$. Let $R=R^\new(N')$. 
We first show that, analogous to Lemma~\ref{lem:translate}, $R$ is feasible, and
$h(R)\geq g(\tx):=\sum_{i\geq 1}\sum_{S\in\prt_i}\al^{(i)}\min\{\tx(S)+1,|S|\}$.
We have 
\begin{equation*}
\begin{split}
c(R)=c(R_1)+c\Bigl(\bigcup_{q\in N'}R^\new(q)\Bigr)
& \leq c(R_1)+\sum_{q\in N'} c\bigl(R^\new(q)\bigr) \\
& = c(R_1)+\beta^\new(N')\leq c(R_1)+B^\new = B.
\end{split}
\end{equation*}
Consider any index $0 \le i \le k - 1$.
As in the proof of part (ii) of Lemma \ref{lem:translate}, for every node $v^{A, i} \in N'$, we 
know that $A$ is a component of $(V, E_{\le i}\setminus R)$. 
Consider any $S\in\prt_{k-i}$, and let 
$U=\bigcup_{v^{A,i}\in S}A$. Note that if $S\sm N'\neq\es$, then $\bigcup_{v^{A,i}\in S\sm N'}A$ is non-empty. 
So there are always at least $\min\bigl\{|N' \cap S| + 1, |S|\bigr\}$ components of 
$(V,E_{\leq i}\sm R)$ contained in $U$.
Therefore, by Lemma~\ref{lem:val} (and since $\prt_i$ is a partition of $\level_i$ for
each $i$), we obtain 
$$
h(R)=\val(R)+w_k
\ge \sum_{i = 0}^{k-1}\sum_{S \in \prt_{k-i}}(w_{i+1} - w_i)\min\bigl\{|N' \cap S| + 1, |S|\bigr\}
=g(\tx).
$$
The guarantee in Lemma~\ref{newtkn} then yields the following. Recall that $a=1-b$. 
\begin{align}
h(R) & \geq\sum_{i=0}^{k-1}(w_{i+1}-w_i)\sg(E_{\leq i}\sm R_2)\cdot\tfrac{b}{2}
+\sum_{\substack{i=0,\ldots,k-1: \\ \sg(E_{\leq i}\sm R_1)>1}}
(w_{i+1}-w_i)\biggl[\Bigl(1-\tfrac{b}{2}\Bigr)\sg(E_{\leq i}\sm R_1)-1\biggr] \notag \\
& \geq\tfrac{b}{2}\cdot h(R_2)
+\sum_{\substack{i=0,\ldots,k-1: \\ \sg(E_{\leq i}\sm R_1)>1}}
(w_{i+1}-w_i)\sg(E_{\leq i}\sm R_1)\cdot\tfrac{a}{2} \label{4apxineq1} \\
& =\tfrac{b}{2}\cdot h(R_2)
+\sum_{i=0}^{k-1}(w_{i+1}-w_i)\sg(E_{\leq i}\sm R_1)\cdot\tfrac{a}{2}
-\sum_{\substack{i=0,\ldots,k-1: \\ \sg(E_{\leq i}\sm R_1)=1}}(w_{i+1}-w_i)\cdot\tfrac{a}{2} \notag \\
& \geq\tfrac{a}{2}\cdot h(R_1)+\tfrac{b}{2}\cdot h(R_2)-\tfrac{a}{2}\cdot w_k. \notag
\end{align}
Inequality \eqref{4apxineq1} follows since $t\bigl(1-\frac{b}{2}\bigr)-1\geq t(1-b)/2$ for
all $t\geq 2$.
\end{proof} 

\begin{proofof}{Theorem~\ref{mstinterthm}}
We either return an optimal solution found by Theorem~\ref{bipoint}, or return
the better of a min-cut of $(V,E_{\leq k-1})$ and the interdiction set returned by 
Lemma~\ref{newinter}. We obtain a solution of value 
\begin{align*}
\max\Bigl\{w_k,\tfrac{a}{2}\cdot h(R_1)&+\tfrac{b}{2}\cdot h(R_2)-\bigl(1+\tfrac{a}{2}\bigr)w_k\Bigr\}
\\ & \geq \frac{1+a}{3+a}\cdot w_k
+\frac{2}{3+a}\cdot\Bigl(\tfrac{a}{2}\cdot h(R_1)+\tfrac{b}{2}\cdot h(R_2)-\bigl(1+\tfrac{a}{2}\bigr)w_k\Bigr)
\\
& = \frac{ah(R_1)+bh(R_2)-w_k}{3+a}=\frac{\UB}{3+a}\geq\UB/4\geq\OPT/4. \qedhere
\end{align*}
\end{proofof}

\subsection{Lower bound on the approximation ratio achievable relative to \boldmath \UB} 
\label{lbound}  

We show that for every $\e>0$, there exist MST-interdiction instances, where 
$\UB/\OPT\geq 3-\e$. This implies that one cannot achieve an approximation ratio better
than $3$ when comparing against the upper bound $\UB$ used in our analysis (and the one
in~\cite{Zenklusen15}).   

\begin{theorem} \label{theo:integrality_gap}
For any fixed $\e>0$, there exists an instance of MST interdiction where 
$\UB/\OPT\geq 3-\e$. 
\end{theorem}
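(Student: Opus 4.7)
The plan is to construct, for every $\e>0$, a specific MST-interdiction instance witnessing $\UB/\OPT\geq 3-\e$; I would do this by exhibiting a parameterized family $\{I_n\}_n$ and showing $\UB(I_n)/\OPT(I_n)\to 3$ as $n\to\infty$. The guiding principle comes from the tight case of the $4$-approximation analysis: the inequality $\max\bigl\{w_k,\tfrac{a}{2}h(R_1)+\tfrac{b}{2}h(R_2)-(1+\tfrac{a}{2})w_k\bigr\}\geq \UB/(3+a)$ becomes tight as $a\to 0$, corresponding to $R_1=\es$, $c(R_2)\to B^+$, $w_k=\OPT=\UB/3$, and $\val(R_2)=3w_k$. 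So I would aim to design an instance where (i) the Lagrangian problem \eqref{lagp} has two optima $R_1=\es$ and some $R_2\supsetneq\es$ with $c(R_2)$ only slightly above $B$; (ii) $\val(R_2)\approx 3 w_k$; and (iii) every feasible $R$ achieves value at most $w_k$, so that $\OPT=w_k$. Since $\UB=a\val(R_1)+b\val(R_2)=b\val(R_2)$ (by Claim~\ref{claim:def_ub} and $R_1=\es$), this would give $\UB\to 3w_k=3\OPT$.

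First I would fix a graph with a small number of distinct edge weights $0\leq w_1<\dots<w_M$, and carefully choose interdiction costs so that $E_{\leq k-1}$ contains a ``thick cut'' whose total interdiction cost exceeds $B$ only by a small $\Theta(1/n)$ fraction. The idea is that the thick cut can be fully interdicted by a slightly-infeasible $R_2$, which shatters $(V,E_{\leq k-1}\sm R_2)$ into many additional connected components; by Lemma~\ref{lem:val}, this boosts $\val(R_2)$ to roughly $3 w_k$. At the same time, the weights and costs would be arranged so that any partial interdiction of this thick cut creates \emph{no} additional components, ensuring that the best any feasible $R$ can do is to use a single min-cut of $(V,E_{\leq k-1})$, yielding value at most $w_k$.

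The steps in the verification would then be, in order: (1) determine the index $k$ and check that $(V,E_k)$ is connected and $(V,E_{\leq k}\sm R)$ stays connected for every feasible $R$; (2) identify the critical Lagrangian multiplier $\ld^*=\val(R_2)/c(R_2)$ at which $f_{\ld^*}(\es)=f_{\ld^*}(R_2)=0$, and verify that no other $R\sse E_{\leq k-1}$ attains $f_{\ld^*}(R)>0$ (so that $\es$ and $R_2$ are the only Lagrangian optima at $\ld^*$); (3) invoke Theorem~\ref{bipoint} case (ii) and compute $a,b$ from $a\cdot 0+b\cdot c(R_2)=B$, giving $b=B/c(R_2)\to 1$; and (4) compare $\UB=b\val(R_2)$ with $\OPT=w_k$, driving the ratio to $3$ as $n$ grows.

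The hard part will be ensuring the value function has a sufficiently sharp discrete jump: every feasible $R$ should yield $\val(R)\leq w_k$, yet interdicting the whole thick cut should give $\val(R_2)\approx 3w_k$. This requires a delicate arrangement of parallel edges at multiple weight levels so that \emph{only} complete removal of the thick cut triggers the factor-$3$ increase in $\sigma(E_{\leq k-1}\sm R)$, while any partial removal leaves the component structure unchanged. A second, related subtlety is verifying that no intermediate interdiction set accidentally becomes a better Lagrangian optimum at $\ld^*$; this constrains the relative costs of the parallel edges and likely requires the cost profile within the thick cut to be carefully tuned so that the Lagrangian objective $\val(R)-\ld^*c(R)$ is strictly smaller than $0$ on any $R\neq\es,R_2$.
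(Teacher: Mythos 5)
Your plan targets the wrong parameter regime, and the configuration you aim for is unachievable, so the instance you propose to search for does not exist. First, a small point: the tight case of the $4$-approximation analysis is $a\to 1$, not $a\to 0$ (the proven ratio is $3+a$, which equals $4$ only at $a=1$). The more serious problem is that the regime ``$c(R_2)\to B^+$ (so $b\to 1$) with $\val(R_2)\approx 3w_k$ and $\OPT=w_k$'' is self-defeating. If $c(R_2)-B=:\dt B$ is small, you can delete from $R_2$ a cheap subset $S$ with $c(S)\geq\dt B$ to restore feasibility, and since re-inserting a single edge decreases the number of components by at most one at each weight level, Lemma~\ref{lem:val} gives $\val(R_2\sm S)\geq\val(R_2)-|S|\cdot w_k$. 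The requirement that every feasible set has value at most $\OPT=w_k$ then forces the interdiction cost of $R_2$ to be spread very thinly, which pushes $c(R_2)$ far above $B$. Concretely, with two weight levels ($w_1=0<w_2=w_k$), if $R_2$ creates components $C_1,\dots,C_4$ (so $\val(R_2)=3w_k$), then for each pair $i,j$ the set $R_2\sm E(C_i,C_j)$ still creates at least $3$ components and hence must be infeasible, i.e.\ $c\bigl(E(C_i,C_j)\bigr)<\dt B$; summing over the six pairs gives $c(R_2)<6\dt B$, hence $\dt>1/5$, $b<5/6$, and $\UB=b\cdot\val(R_2)\leq 2.5$ --- short of $3-\e$. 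Making $\val(R_2)$ larger only helps if $b\to 0$, which contradicts your premise, so the ``sharp threshold'' you flag as the hard part is not merely delicate but impossible to arrange.

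The paper's instance lives in the opposite regime: $R_1=\es$ with $\val(R_1)=\val(\es)=\OPT=w_k$, $a\to 1$, $b\to 0$, $c(R_2)=\Theta(n)\cdot B$, and $\val(R_2)\to\infty$, so that $\UB=a\,\val(R_1)+b\,\val(R_2)\to\OPT+2\,\OPT=3\,\OPT$; the factor $3$ is split as $1$ from $R_1$ and $2$ from $R_2$, rather than all $3$ coming from $R_2$ as in your plan. The concrete construction is very simple: a zero-weight cycle on $v_1,\dots,v_{n-1}$ with interdiction cost $n$ per edge, plus a weight-$1$ star centered at $v_n$ with cost $2n$ per edge, and budget $B=2n-2$; then any feasible set removes at most one cycle edge, so $\OPT=1$, while $\UB=\min_{\ld\geq 0}\bigl(\ld B+\max\{1,(n-1)(1-n\ld)\}\bigr)=3-4/n$.
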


\begin{proof}
Our instance is a graph $G = (V,E)$, where $V := \{v_1, \dots, v_n\}$ with
$n\ge\min\{4,4/\e\}$. 
The edge set is $E=E_1\cup E_2$, where 
$E_1:=\{v_1v_2, v_2v_3,\dots,v_{n-2}v_{n-1},v_{n-1}v_1\}$ is a simple cycle on
$v_1,\ldots,v_{n-1}$, and $E_2:=\{v_1v_n, v_2v_n,\dots, v_{n-1}v_n\}$ is a  star
rooted at $v_n$ with leaves $v_1,\ldots,v_{n-1}$.   
The edges in $E_1$ have weight $w_1=0$ and interdiction cost $n$, while
the edges in $E_2$ have weight $w_2=1$ and interdiction cost $2n$. 
The interdiction budget is $B=2n-2$. 

Observe that the quantity $k$, as defined in Claim \ref{ass:connected}, is equal to $2$.
Taking $R=\es$, the graph $(V, E_{\le 1}\sm R)$ is disconnected, so $k\geq 2$. 
Any feasible interdiction set $R$ contains at most one edge from $E_1$
and no edges from $E_2$, so $(V, E_{\le 2}\sm R)$ is connected, and therefore $k=2$.

This also implies that $\val(R)\leq 1$ for any feasible interdiction set $R$: since 
$R\sse E_1$ and  $|R\cap E_1|\leq 1$, we can construct a spanning tree of $G-R$ by taking
$n-2$ edges from $E_1 \setminus R$ and any edge from $E_2$. So $\OPT=1$.  

Now we proceed to compute the upper bound $\UB$. For $R\sse E_{\leq 1}$, we have
$f_\ld(R)=1$ if $R=\es$, and $|R|(1-n\ld)$ otherwise. 
Therefore, 
$$
\eta(\ld):=\ld B+\max_{R\sse E_{\leq 1}}f_\ld(R)=\ld B+\max\bigl\{1,(n-1)(1-n\ld)\bigr\}
=\max\Bigl\{\ld B+1,(n-1)-\ld\bigl(n(n-1)-B\bigr)\Bigr\},
$$
which is minimized at $\ld=\frac{n-2}{n(n-1)}$. Therefore 
$\UB:=\min_{\ld\geq 0}\eta(\ld)=\frac{2(n-2)}{n}+1=3-\frac{4}{n}\geq (3-\e)\OPT$.
\end{proof}

\section{Extension to metric-TSP interdiction} \label{tspinter}
In the {\em metric-TSP interdiction} problem, we are given a complete graph $G = (V, E)$
with metric edge weights 
$\{w_e\}_{e \in E}$ and nonnegative interdiction costs $\{c_e\}_{e \in E}$, along with a
nonnegative budget $B$. The goal is to find a set of edges $R \subseteq E$ such that 
$c(R)\le B$ so as to maximize the 
minimum $w$-weight of a closed walk in the graph $G - R$ that visits each vertex at least
once. Zenklusen \cite{Zenklusen15} observed that an $\alpha$-approximation
algorithm for the MST interdiction problem yields a $2\alpha$-approximation algorithm
for the metric-TSP interdiction problem. As a corollary to our Theorem
\ref{theo:MST_interdiction}, we therefore obtain the following result. 

\begin{theorem} \label{theo:TSP_interdiction}
There is an $8$-approximation algorithm for the metric-TSP interdiction problem.
\end{theorem}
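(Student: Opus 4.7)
The plan is to reduce metric-TSP interdiction to MST interdiction on the same instance and invoke the $4$-approximation from Theorem~\ref{theo:MST_interdiction}. Specifically, given a metric-TSP interdiction instance $(G=(V,E),\{w_e\},\{c_e\},B)$, I would run the MST-interdiction algorithm on the exact same tuple, obtaining a set $R\sse E$ with $c(R)\leq B$ such that $\val_{\text{MST}}(G-R)\geq\OPT_{\text{MST}}/4$, where $\OPT_{\text{MST}}$ denotes the optimum of the MST-interdiction problem on this instance. I would then return this $R$ as the metric-TSP interdiction set; the analysis will compare the minimum weight of a closed walk visiting all nodes in $G-R$ against the optimum $\OPT_{\text{TSP}}$ of the metric-TSP interdiction problem.

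The key inequality to invoke is the classical MST/TSP relationship: for any connected multigraph $H$ on vertex set $V$ with metric edge weights, if $\mathrm{MST}(H)$ is the weight of a minimum spanning tree and $\mathrm{TSP}(H)$ is the minimum weight of a closed walk visiting every vertex, then $\mathrm{MST}(H)\leq\mathrm{TSP}(H)\leq 2\cdot\mathrm{MST}(H)$. The lower bound holds because removing any edge from a closed spanning walk yields a connected spanning subgraph, whose weight is at least $\mathrm{MST}(H)$; the upper bound is the standard double-tree argument. Let $R^*$ denote the optimal metric-TSP interdiction set, so $\OPT_{\text{TSP}}=\mathrm{TSP}(G-R^*)$. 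Then
\begin{equation*}
\OPT_{\text{TSP}}=\mathrm{TSP}(G-R^*)\leq 2\cdot\mathrm{MST}(G-R^*)\leq 2\cdot\OPT_{\text{MST}},
\end{equation*}
since $R^*$ is also a feasible MST-interdiction solution. Combining with $\mathrm{TSP}(G-R)\geq\mathrm{MST}(G-R)\geq\OPT_{\text{MST}}/4\geq\OPT_{\text{TSP}}/8$ yields the claimed $8$-approximation.

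The only subtlety — and the main (mild) obstacle — is handling unboundedness: both $\OPT_{\text{MST}}$ and $\OPT_{\text{TSP}}$ are infinite precisely when $G-R$ can be disconnected within the budget $B$, which is detectable in polynomial time by checking whether some cut has $c$-cost at most $B$ (as done in Section~\ref{prelim}). If so, we output any such disconnecting $R$ and declare an infinite objective for both problems. Otherwise, every feasible $R$ leaves $G-R$ connected, $\OPT_{\text{MST}}$ and $\OPT_{\text{TSP}}$ are finite, and the above chain of inequalities applies verbatim, completing the proof.
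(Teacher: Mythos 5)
Your proposal is correct and follows exactly the route the paper takes: the paper simply cites Zenklusen's observation that an $\al$-approximation for MST interdiction gives a $2\al$-approximation for metric-TSP interdiction and applies Theorem~\ref{theo:MST_interdiction}, and your argument is precisely the proof of that observation (run the MST-interdiction algorithm on the same instance and use $\mathrm{MST}(H)\leq\mathrm{TSP}(H)\leq 2\cdot\mathrm{MST}(H)$ on $G-R$ and $G-R^*$). The only difference is that you spell out the details, including the unboundedness check, which the paper leaves implicit.
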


\section{Maximum-spanning-tree interdiction} \label{maxstinter}
We now consider the {\em maximum-spanning-tree interdiction} problem, wherein the input
$\bigl(G=(V,E),\{w_e\geq 0\}_{e\in E},\{c_e\geq 0\}_{e\in E},B\bigr)$ is the same as in
the MST interdiction problem, but the goal is
to remove a set $R\sse E$ of edges with $c(R)\leq B$ so as to {\em minimize} the
$w$-weight of a {\em maximum spanning tree} of $G-R$. 
We show that this problem is at least as hard as the minimization version of the 
{\em densest-$k$-subgraph} problem (\mindks), wherein we seek a minimum-size set $S$ of
nodes in a given graph such that at least $k$ edges have both endpoints in $S$.
This shows a stark contrast between MST interdiction and maximum-spanning-tree (\maxst)
interdiction. 

\begin{theorem}
An $\al(m,n)$-approximation algorithm for the maximum-spanning-tree interdiction 
problem for instances with $m$ edges, $n$ nodes, yields a
$2\alpha(m+n-1,n)$-approximation algorithm for \mindks for instances 
{with $m$ edges and $n$ nodes.}  
\end{theorem}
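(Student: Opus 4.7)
The plan is to give a gap-preserving reduction from \mindks to maximum-spanning-tree interdiction. Given a \mindks instance consisting of $H=(V,F)$ with $|V|=n$, $|F|=m$, and target $k\le m$, I construct the multigraph $G=(V,E)$ where $E$ is the disjoint union of $F$ and the $n-1$ edges of an arbitrary spanning tree $T$ on $V$ (added as fresh edges, possibly parallel to those of $F$). The edge weights and interdiction costs are: $w_e=1$, $c_e=1$ for $e\in F$, and $w_e=0$, $c_e=m-k+1$ for $e\in T$. The interdiction budget is $B:=m-k$. Since a single $T$-edge already exceeds $B$, every feasible interdiction set $R$ satisfies $R\sse F$ and $|R|\le m-k$, so $F':=F\sm R$ retains at least $k$ edges.

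For any such $R$, a maximum spanning tree of $G-R$ first greedily includes weight-$1$ edges (yielding a spanning forest of $(V,F')$ with $n-\sg(F')$ edges) and then completes to a spanning tree using weight-$0$ edges of $T$. Hence the \maxst-weight equals $z(R):=n-\sg(F')$, where $\sg(F')$ is the number of connected components of $(V,F')$. Let $z^*$ denote the optimum of this interdiction instance and $s^*:=|S^*|$ the \mindks optimum.

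I would then prove two bridging inequalities. Upper bound $z^*\le s^*-1$: take $R^*:=F\sm\binom{S^*}{2}$ coming from an optimal \mindks solution $S^*$; then $|R^*|=m-|F\cap\binom{S^*}{2}|\le m-k=B$, so $R^*$ is feasible, all $n-s^*$ vertices outside $S^*$ are isolated in $(V,F\sm R^*)$, and $S^*$ itself contributes at least one more component, giving $\sg(F\sm R^*)\ge n-s^*+1$. Lower bound $|S|\le 2z(R)$, used to extract a \mindks solution from an arbitrary feasible $R$: let $S$ be the set of vertices incident to an edge of $F':=F\sm R$, so $|F\cap\binom{S}{2}|\ge|F'|\ge k$ and $S$ is \mindks-feasible; since each non-singleton component of $(V,F')$ contains at least two vertices of $S$, we obtain $\sg(F')\le(n-|S|)+|S|/2=n-|S|/2$, i.e., $|S|\le 2\bigl(n-\sg(F')\bigr)=2z(R)$.

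Putting these together, applying an $\alpha(m+n-1,n)$-approximation to the \maxst interdiction instance (which has $m+n-1$ edges and $n$ nodes) returns some $R$ with $z(R)\le\alpha(m+n-1,n)\cdot z^*$, and the resulting $S$ satisfies $|S|\le 2z(R)\le 2\alpha(m+n-1,n)\cdot z^*\le 2\alpha(m+n-1,n)\cdot s^*$, which is the claimed $2\alpha(m+n-1,n)$-approximation for \mindks. The one step that really needs to be pinned down is the counting inequality $|S|\le 2z(R)$, which rests on the simple but essential observation that every non-singleton component of $(V,F')$ spans at least two vertices of $S$; apart from this and the choice $B=m-k$, the reduction is essentially bookkeeping.
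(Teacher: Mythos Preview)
Your proposal is correct and follows essentially the same reduction as the paper: the same multigraph $G=(V,F\cup T)$ with weight-$1$, cost-$1$ edges in $F$ and weight-$0$, cost-$(m-k+1)$ tree edges, the same budget $B=m-k$, and the same two bridging inequalities (your ``non-singleton components have at least two vertices'' observation is exactly the paper's $\sigma\le |S|/2$ step). The only cosmetic difference is that you record the slightly sharper $z^*\le s^*-1$ rather than $z^*\le s^*$, which does not affect the final factor.
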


\newcommand{\IMaxST}{\ensuremath{\mathcal{I}'}}
\newcommand{\IDS}{\ensuremath{\mathcal{I}}}

\begin{proof}
Let $\IDS=\bigl(H=(N,F),k\bigr)$ be a \mindks instance, with $|N|=n$, $|F|=m$. 
We may assume that $|F| \ge k$ as otherwise the instance is infeasible.
We construct the following instance $\IMaxST$ of the maximum-spanning-tree interdiction
problem. The underlying multigraph is $G=(N,E:=E' \cup F)$, where $E'$ is an arbitrary tree spanning $N$. Define $w_e = 0$ for all $e \in E'$, and $w_e = 1$ for all $e \in F$. 
The interdiction costs are $c_e = 1$ for all $e \in F$, and 
$c_e=m-k+1$ for all $e\in E'$. Finally, we set the budget to $B=m-k$. Thus, if $R$ is a
feasible interdiction set, we must have $R\sse F$, and so $G-R$ is connected and the
interdiction problem has a finite optimal value.

We show that: (1) if $R\sse F$ is a feasible interdiction set, then the set $S$ of
non-isolated nodes of $(N,F\sm R)$ is a feasible \mindks solution of value at most
$2\cdot\maxst(G-R)$, where $\maxst(G-R)$ is the weight of a maximum spanning tree of
$G-R$;  
(2) conversely, if $S \subseteq N$ is a feasible \mindks solution, then $F\sm F(S)$ is a feasible
interdiction set with objective value at most $|S|$, where $F(S)$ is the set of edges in $F$
having both endpoints in $S$. 

These two statements imply the theorem as follows. 
Let $\A$ be the stated $\al=\al(m+n - 1,n)$-approximation algorithm for
maximum-spanning-tree interdiction. 
We run $\mathcal{A}$ to obtain a feasible interdiction set $R$, which yields a
corresponding \mindks solution $S$. Then,
$$
|S|\leq 2\cdot\maxst(G-R)\leq 2\al\OPT(\IMaxST)\leq 2\al\OPT(\IDS) \ ,
$$
where the first and last inequalities follow from statements (1) and (2) above.

We now prove statements (1) and (2).
Let $R\sse F$ be such that $c(R)=|R|\leq B$.
Let $S$ denote the set of non-isolated vertices in the graph $(N, F \setminus R)$, so
every node in $S$ has at least one edge of $F\sm R$ incident to it. 
First, we argue that $S$ is a feasible \mindks-solution. 
Since each vertex of $N \setminus S$ is isolated in the graph $(N,F \setminus R)$, it
follows that $R\supseteq F\sm F(S)$.
Therefore, $|F|-|F(S)|\leq|R|\leq B=m-k$, and so $|F(S)|\geq k$.
The weight of a maximum spanning tree in $G - R$ is equal to $|S| - \sigma$, where
$\sigma$ is the number of connected components of the graph $(S, F\setminus R)$. By the
definition of $S$, this multigraph has no isolated vertices. So $\sigma \le |S| / 2$, and
therefore $\maxst(G-R)= |S|-\sigma \ge |S|/2$. This proves (1).

Conversely, suppose $S\sse N$ is such that $|F(S)|\geq k$. Then $R=F\sm F(S)$ satisfies
$c(R)=m-|F(S)|\leq B$, so is a feasible interdiction set. We have 
$\maxst(G-R)=|S|-\sg\leq |S|$,
where $\sg$ is the number of connected components of $(S,F\sm R)$. This proves (2).
\end{proof}

The above hardness result continues to hold with unit interdiction costs, since we can
replace each edge $e$ with $c_e=m-k+1$ in the above reduction with $m-k+1$ parallel  
unit-cost edges (of weight 0).  
Our reduction creates a \maxst-interdiction instance with only two distinct edge weights
$w_1<w_2$. This interdiction problem can be seen as a special case of the following 
{\em matroid interdiction} problem (involving the graphic matroid on the ground set
$\{e\in E: w_e=w_2\}$): given a matroid with ground set $U$ and rank function 
$\rk$, interdiction costs $c:U\mapsto\R_+$, and budget $B$, minimize $\rk(U\sm R)$ subject
to $c(R)\leq B$. 
Our hardness result for \maxst interdiction thus also implies that matroid interdiction is
\mindks-hard. A related rank-reduction problem---minimize $c(R)$ subject to
$\rk(U\sm R)\leq\rk(U)-k$---was considered by~\cite{JoretV15} and shown to be 
\mindks-hard for transversal matroids (but not for graphic matroids, wherein this is
essentially the min $k$-cut problem).      

We remark that it is possible to achieve {\em bicriteria} approximation guarantees for
\maxst interdiction: we can obtain a solution of weight $W\leq(1+\e)\OPT$ 
while violating the budget by a $\bigl(1+\frac{1}{\e}\bigr)$ factor (and $W>\OPT$ implies 
no budget violation). This follows by taking $\ld=\e\OPT/B$  
in the Lagrangian problem $\min_R\bigl(\maxst(G-R)+\ld c(R)\bigr)$, which is a submodular
minimization problem that can be solved exactly; it also follows from the work
of~\cite{ChestnutZ15b}.

\bibliographystyle{plain}

\begin{thebibliography}{10}

\bibitem{Assimakopoulos87}
N. Assimakopoulos. 
\newblock A network interdiction model for hospital infection control. 
\newblock {\em Computers in Biology and Medicine}, 17(6):413--422, 1987.

\bibitem{BurchCKMPS03}
C. Burch, R. Carr, S. Krumke, M. Marathe, C. Phillips, and E. Sundberg. 
\newblock A decomposition-based pseudoapproximation algorithm for network flow
inhibition. 
\newblock In {\em Network Interdiction and Stochastic Integer Programming}, Chapter 3,
pages 51--68. Springer, 2003. 

\bibitem{ChestnutZ15}
S.~Chestnut and R.~Zenklusen.
\newblock Hardness and approximation for network flow interdiction.
\newblock {\em CS arXiv}, November 2015.

\bibitem{ChestnutZ15b}
S.~Chestnut and R.~Zenklusen.
\newblock Interdicting structured combinatorial optimization problems with
$\{0,1\}$-objectives. 
\newblock {\em CS arXiv}, November 2015.

\bibitem{ChurchSM04}
R. L. Church, M. P. Scaparra, and R. S. Middleton. 
\newblock Identifying critical infrastructure: the median and covering facility
interdiction problems. 
\newblock {\em Annals of the Association of American Geographers}, 94(3):491--502, 2004. 

\bibitem{DinitzG13}
M. Dinitz and A. Gupta. 
\newblock Packing interdiction and partial covering problems. 
\newblock In {\em Proceedings of 16th IPCO}, pages 157--168, 2013.

\bibitem{EngelbergKLN07}
A. Engelberg, J. K\"onemann, S. Leonardi, and J. Naor. 
\newblock Cut problems in graphs with a budget constraint.
\newblock {\em Journal of Discrete Algorithms}, 5:262--279, 2007.

\bibitem{FleischerI03}
L.~Fleischer and S.~Iwata.
\newblock A push-relabel framework for submodular function minimization and applications
to parametric optimization. 
\newblock Discrete Applied Mathematics, 131(2):311--322, 2003.

\bibitem{FulkersonH77}
D. R. Fulkerson and G. C. Harding. 
\newblock Maximizing the minimum source-sink path subject to a budget constraint. 
\newblock {\em Math. Programming},13:116--118, 1977. 

\bibitem{FredericksonS99}
G.~N.~Frederickson and R.~Solis-Oba.
\newblock Increasing the weight of minimum spanning trees.
\newblock {\em Journal of Algorithms}, 33:244--266, 1999.

\bibitem{GhareMT71}
P. M. Ghare, D. C. Montgomery, and W. C. Turner. 
\newblock Optimal interdiction policy for a flow network.
\newblock {\em Naval Research Logistics Quarterly}, 18:37--45, 1971.

\bibitem{GrandoniRSZ14}
F.~Grandoni, R.~Ravi, M.~Singh, and R.~Zenklusen.
\newblock New approaches to multi-objective optimization.
\newblock {\em Mathematical Programming}, 146(1-2):525--554, 2014.

\bibitem{GuruganeshSS15}
G.~Guruganesh, L.~Sanit\`a, and C.~Swamy.
\newblock Improved region-growing and combinatorial algorithms for $k$-route cut problems.
\newblock In {\em SODA}, pages 676--695, 2015. 

\bibitem{IsraeliW02}
E.~Israeli and R.~K.~Wood. 
\newblock Shortest-path network interdiction, 
\newblock {\em Networks}, 40:97--111, 2002.

\bibitem{JohnsonN83}
D. S. Johnson and K. A. Niemi. 
\newblock On knapsacks, partitions, and a new dynamic programming technique for trees. 
\newblock {\em Math. of Oper. Research}, 8(1):1--14, 1983.

\bibitem{JoretV15}
G.~Joret and A.~Vetta.
\newblock Reducing the rank of a matroid. 
\newblock {\em Discrete Mathematics \& Theoretical Computer Science}, 17(2):143--156,
2015. 

\bibitem{KhachiyanBBEGRZ08}
L. Khachiyan, E. Boros, K. Borys, K. Elbassioni, V. Gurvich, G. Rudolf, and J. Zhao. 
\newblock On short paths interdiction problems: total and node-wise limited
interdiction. 
\newblock {\em Theoretical Computer Science}, 43(2):204--233, 2008.

\bibitem{KolliopoulosS07}
S.~Kolliopoulos and G.~Steiner.
\newblock Partially-ordered knapsack and applications to scheduling.
\newblock {\em Discrete Applied Mathematics}, 155(8):889--897, 2007. 

\bibitem{KonemannPS11}
J.~K\"onemann, O.~Parekh, and D.~Segev.
\newblock A unified approach to approximating partial covering problems.
\newblock Algorithmica 59(4):489--509 2011.

\bibitem{Lee16}
Euiwoong Lee.
\newblock Improved hardness for cut, interdiction, and firefighter problems.
\newblock {\em CS arXiv}, July 2016.

\bibitem{Liang01}
W. Liang. 
\newblock Finding the $k$ most vital edges with respect to minimum spanning trees for
fixed $k$. 
\newblock {\em Discrete Applied Mathematics}, 113(2-3):319--327, 2001.

\bibitem{LinharesS17}
A.~Linhares and C.~Swamy.
\newblock Improved algorithms for MST and metric-TSP interdiction.
\newblock In {\em Proceedings of the 43rd International Colloquium on Automata, Languages
  and Programming (ICALP)}, 2017.

\bibitem{LinC93}
K. Liri and M. Chern. 
\newblock The most vital edges in the minimum spanning tree problem.
\newblock {\em Information Processing Letters}, 45:25--31, 1993.

\bibitem{Nagano07}
K.~Nagano.
\newblock A faster parametric submodular function minimization algorithm and applications.
\newblock Technical report, University of Tokyo, 2007. METR 2007-43.

\bibitem{PanCM02}
F. Pan, W. Charlton, and D. P. Morton. 
\newblock {\em Stochastic network interdiction of nuclear material smuggling}. 
\newblock In D.L. Woodruff, editor, {\em Network Interdiction and Stochastic Integer
Programming}, pages 1--19, 2002.

\bibitem{Phillips93}
C. A. Phillips. 
\newblock The network inhibition problem. 
\newblock In {\em Proceedings of the 25th Annual ACM Symposium on Theory of Computing
(STOC)}, pages 776--785, 1993. 

\bibitem{RaghavendraS10}
P. Raghavendra and D. Steurer. 
\newblock Graph expansion and the unique games conjecture.
\newblock In {\em Proceedings of the 42nd STOC}, pages 755--764, 2010. 

\bibitem{SalmeronWB09}
J. Salmeron, K. Wood, and R. Baldick. 
\newblock Worst-case interdiction analysis of large-scale electric power grids. 
\newblock {\em IEEE Trans. on Power Systems}, 24(1):96--104, 2009.

\bibitem{Wood93}
R.K. Wood. 
\newblock Deterministic network interdiction. 
\newblock {\em Mathematical and Computer Modeling}, 17(2):1--18, 1993.

\bibitem{Zenklusen15}
R.~Zenklusen.
\newblock An $O(1)$-Approximation for Minimum Spanning Tree Interdiction.
\newblock In {\em Proceedings of the 56th FOCS}, pages 709--728, 2015.

\bibitem{Zenklusen10a}
R. Zenklusen. 
\newblock Matching interdiction. 
\newblock {\em Discrete App. Math.}, 158:1676--1690, 2010.

\bibitem{Zenklusen10b}
R. Zenklusen. 
\newblock Network flow interdiction on planar graphs. 
\newblock {\em Discrete Applied Mathematics}, 158(13):1441--1455, 2010.

\end{thebibliography}

\appendix

\section{Proof of Theorem~\ref{bipoint}} \label{append-prelim}
For any $\ld\geq 0$, by supermodularity of $f_\ld$, there is a unique minimal set in
$\Oc^*_\ld$, which we denote by $R^*_\ld$. We remark that $R^*_\ld$ can be computed in
polytime using an algorithm for submodular-function minimization (see,
e.g.,~\cite{FleischerI03}). Let $g(\ld)$ denote the optimal value of \eqref{lagp}. Then
$g(\ld)$ is the maximum of a finite collection of nonincreasing linear functions, so it is a continuous, 
piecewise-linear, nonincreasing, convex function. Further, $\ld\geq 0$ is a {\em breakpoint}
of $g(.)$ iff there are at least two distinct values in $\{c(R): R\in\Oc^*_\ld\}$. Also, if
$\ld$ is not a breakpoint, then the slope of $g$ at $\ld$ is $-c(R^*_\ld)$.

We first prove the following: 
\begin{equation}
\text{if $\lambda > \lambda'$, \ \ then \ \ $R^*_\ld \subseteq R^*_{\ld'}$}. \tag{*} 
\label{nested}
\end{equation}

\smallskip \noindent
Let $R=R^*_\ld$ and $R'=R^*_{\ld'}$. 
Suppose that $R\not\subseteq R'$. We have 
$f_\ld(R\cap R')+f_\ld(R\cup R')\geq f_\ld(R)+f_\ld(R')$. Since $R\cap R'\subsetneq R$
(since we assume that $R'$ is not a superset of $R$), and 
$R=R^*_\ld$ is the minimal set in $\Oc^*_\ld$, we have $f_\ld(R\cap R')<f_\ld(R)$. So we must
have $f_\ld(R\cup R')>f_\ld(R')$. But then 
$$
f_{\ld'}(R\cup R')=f_\ld(R\cup R')+(\ld-\ld')c(R\cup R')>f_\ld(R')+(\ld-\ld')c(R')=f_{\ld'}(R') \ ,
$$
which contradicts that $R'\in\Oc^*_{\ld'}$.

Next we present the two approaches for proving the theorem. The first one utilizes
binary search and yields a more elementary, but weakly polytime algorithm. The second one
utilizes the fact that $g(\ld)$ has polynomially many breakpoints and slopes, and uses
results on parametric submodular-function minimization to obtain $g(\ld)$ in strongly
polynomial time, which then yields the theorem.

If $c(R^*_0)\leq B$, then $\val(R^*_0)\geq\OPT$ and we are done, so assume this is not the
case in the sequel. Let $M$ be the smallest integer such that all the $w_e$s, $c_e$s, and
$B$ are multiples of $\frac{1}{M}$; note that $\log M$ is polynomially bounded. For
$\ld=K:=2w_k(n-1)M$, we must have $c(R^*_\ld)\leq B$ as otherwise 
$c(R^*_\ld)\geq B+\frac{1}{M}$, and so 
$\val(R^*_\ld)-\ld c(R^*_\ld)\leq -w_k(n-1)-\ld B<\OPT-\ld B$. 

\paragraph{Binary search.}
As noted above $c(R^*_K)\leq B$. Let $\e:=\frac{1}{M^2c(E)^2}$.
We perform binary search in the interval $[0,K]$ to find $\ld_1,\ld_2\in[0,K]$ with
$0 < \ld_1-\ld_2\leq\e/2$ such that $c(R^*_{\ld_1})\leq B<c(R^*_{\ld_2})$. If 
$c(R^*_{\ld_1})=B$, then we have $\val(R^*_{\ld_1})\geq\OPT$ and we are done; so assume
this does not happen.
Let $R_1:=R^*_{\ld_1}$ and $R_2 := R^*_{\ld_2}$. Note that since $\lambda_1 > \lambda_2$,
we have $R_1 \subseteq R_2$ by \eqref{nested}. The only thing left to prove is that
$R_1,R_2\in\Oc^*_\ld$ for some $\ld\geq 0$.

We claim that any two breakpoints of $g$ are separated by at
least $\e$. This is because if $\ld$ is a breakpoint and $A, B\in\Oc^*_\ld$ are such that
$c(A)>c(B)$, then we have
$\ld=\frac{\val(A)-\val(B)}{c(A)-c(B)}$, which can be written as a fraction with integer
numerator and positive integer denominator bounded by $Mc(E)$. 
So the interval $[\ld_2,\ld_1]$ contains at most one breakpoint, and hence, exactly one
breakpoint (since $c(R_1)\neq c(R_2)$). Let $\ld$ be this breakpoint. Then, $g$ is linear
in $(\ld_2,\ld)$ and $(\ld,\ld_1)$ with slopes $-c(R_2)$ and $-c(R_1)$ respectively, so
$g(\ld)=\val(R_1)-\ld c(R_1)=\val(R_2)-\ld c(R_2)$. Thus, $R_1,R_2\in\Oc^*_\ld$. 

\paragraph{Parametric submodular-function minimization.}
By \eqref{nested}, we know that $\{R^*_{\ld}: \ld\geq 0\}$ is a nested family, and hence
consists of at most $|E|+1$ sets. Thus, $g$ consists of at most $|E|+1$ linear
segments. Since $-f_\ld$ is submodular, one can use algorithms for parametric
submodular-function minimization~\cite{FleischerI03,Nagano07} to obtain
the slopes of all these segments (and the corresponding sets in $\Oc^*_\ld$) in strongly
polynomial time. Now if some slope is equal to $-B$, then the corresponding interdiction
set is an optimal solution to the MST-interdiction problem.  
Otherwise, we have that the slope of $g$ at $\ld=0$ is less than $-B$, and the
slope at $\ld=K$ is more than $-B$, so there is some breakpoint $\ld$ where $g$ has slope
less than $-B$ at $\ld^-$ (a value infinitesimally smaller than $\ld$), and more than $-B$
at $\ld^+$ (a value infinitesimally larger than $\ld$). Then, the interdiction sets
$R_1$ and $R_2$ corresponding to the slopes at $\ld^+$ and $\ld^-$ respectively satisfy
the theorem. \hfill \qed 

\end{document}